\newcommand{\footremember}[2]{%
    \footnote{#2}
    \newcounter{#1}
    \setcounter{#1}{\value{footnote}}%
}
\newcommand{\qed}{\rule{1.5mm}{2mm}\vspace{0.1in}}
\newenvironment{proof}{\par\noindent{\bf Proof:}}{\qed}
\newcommand{\ignore}[1]{}
\newtheorem{theorem}{Theorem}
\newtheorem{lemma}{Lemma}
\newtheorem{corollary}{Corollary}
\newtheorem{proposition}{Proposition}
\newtheorem{observation}{Observation}
\newtheorem{definition}{Definition}
\DeclareMathOperator*{\argmin}{argmin} 
\title{Optimal Collaterals in Multi-Enterprise Investment Networks\thanks{This work has been published in Proceedings of the ACM Web Conference 2022 (WWW’22). \href{https://doi.org/10.1145/3485447.3512053}{https://doi.org/10.1145/3485447.3512053.}}} 
\author{
			 Moshe Babaioff\footremember{MS}{Microsoft Research, Israel. Email: moshe@microsoft.com.}%
  \and Yoav Kolumbus\footremember{HUJI}{The Rachel and Selim Benin School of Computer Science and Engineering, The Hebrew University of Jerusalem, Israel. Email: yoav.kolumbus@mail.huji.ac.il.}%
  \and Eyal Winter\footremember{LANCASTER}{Department of Economics, Lancaster University,  Lancashire, England. The Center for the Study of Rationality and Department of Economics, The Hebrew University of Jerusalem, Israel. Email: eyal.winter@huji.ac.il.}
	}
\date{}
\begin{document}
\maketitle

\begin{abstract}
We study a market of investments on networks, where each agent (vertex) can invest in any enterprise linked to her, and at the same time, raise capital for her firm's enterprise from other agents she is linked to. Failing to raise sufficient capital results with the firm defaulting, being unable to invest in others. Our main objective is to examine the role of collateral contracts in handling the strategic risk that can propagate to a systemic risk throughout the network in a cascade of defaults. We take a mechanism-design approach and solve for the optimal scheme of collateral contracts that capital raisers offer their investors. These contracts aim at sustaining the efficient level of investment as a unique Nash equilibrium, while minimizing the total collateral.

Our main results contrast the network environment with its non-network counterpart (where the sets of investors and capital raisers are disjoint). We show that for \emph{acyclic} investment networks, the network environment does not necessitate any additional collaterals, and systemic risk can be fully handled by optimal bilateral collateral contracts between capital raisers and their investors. This is, unfortunately, not the case for \emph{cyclic} investment networks. We show that bilateral contracting will not suffice to resolve systemic risk, and the market will need an external entity to design a global collateral scheme for all capital raisers. Furthermore, the minimum total collateral that will sustain the efficient level of investment as a unique equilibrium may be arbitrarily higher, even in simple cyclic investment networks, compared with its corresponding non-network environment. Additionally, we prove computational-complexity results, both for a single enterprise and for networks.
\end{abstract}

\bibliographystyle{splncs03}
\section{Introduction}
Firms and investment entities that raise capital to undertake profitable investment enterprises often face strategic risk. 
Investors may decline to invest because of concerns that the firm will not manage to raise sufficient capital due to other investors declining to invest. This might lock the investment game in an inefficient 
equilibrium of mis-coordination. When these firms are also connected with one another in a network of investments, such strategic risk can propagate throughout the network and may yield consequences to the entire market. The purpose of this paper is to study such network environments, reveal the systemic strategic risk that they might be exposed to, and show how the negative consequences of such risks can be remedied through optimal collateral schemes. 

Unlike standard risk that can rise from a variety of forces on which we have little control, strategic risk arises due to multiplicity of equilibria and can often be alleviated by means of incentives. In our context these incentives will take the form of collaterals that will guarantee sufficient investment and hence will not be paid on the equilibrium path. In the framework of a single firm raising capital, investors' strategic considerations 
only involve contemplating about the decisions of other potential co-investors in that underlying enterprise, 
and the consequences of their decisions. 
In the framework of investment networks, the scope of players' considerations is broader. 
An investing firm should not only consider the incentives of a co-investor to invest, but also the co-investor's potential insolvency, which, in turn, depends on the incentives of other firms to invest in this co-investor's enterprise. 

Our formal model of investments in a \emph{single enterprise} assumes that the capital-raising firm has some enterprise  
that promises a certain return rate if the capital raised is sufficiently high to cover the enterprise operational costs.  
There is a set of potential investors, each has an opportunity to invest some investor-specific amount in the enterprise, and needs to decide if she invests or not. If the operational costs are not covered by the investments made, the enterprise is in default, and investors' returns are zero. 
If there is no default then the enterprise's returns are divided among the investors proportionally to their investments. Hence the firm's liability for these investments is only limited to the outcome of its enterprise, even when it has other funds in its possession.   
To remove strategic uncertainty that might make investors reluctant to invest, the firm offers collateral contracts to its investors, bounding the loss arising from strategic risk. The objective of the firm is to find an optimal (minimum total) collateral scheme guaranteeing that all investors choose to invest (and hence default does not occur)  in a unique equilibrium of the investment game. 

Our model of the \emph{network investment game} extends the single-enterprise model and has the fundamental additional problem of default cascades.
Unlike in the single-enterprise case, in an investment network each investor may have investment opportunities in multiple enterprises, and for each of them, she has to decide whether to invest or not. Moreover, some of these investors are actually raising capital for their own enterprises, and may default if they do not raise sufficient capital. When a firm defaults it immediately withdraws all its investments, and we assume a worst-case scenario of zero recovery for the defaulted firm,  as common in the literature (see, e.g., \cite{caccioli2018network,gai2010contagion}).    
We assume that the network is exogenously given, representing existing business relationships between the firms. 
The crucial difference from the single-enterprise case is that a firm that has defaulted cannot make investments in other enterprises, which might create a cascade of defaults. Hence, in a network environment each investor must, in principle, assess not only the incentives of other co-investors to invest but also the possibility that some of these co-investors may default and will fail to invest.
The network collateral problem seeks to find an optimal (minimum total) collateral scheme guaranteeing that for the sub-network of efficient enterprises, all investors choose to invest in all enterprises (and hence default does not occur) in a unique equilibrium of the network investment game. Importantly, these collaterals must be sufficient to overcome the systemic risk of default cascades. 
See Section \ref{sec:model-all} for the full details of the model and for the formal definition of the optimization problem.

Our analysis starts by characterizing these optimal schemes, i.e., those schemes that minimize the overall collateral given to sustain the desired outcome of efficient investments in a unique equilibrium. 
We show that whenever full investment can be established as unique equilibrium by 
collateral contracts, an optimal scheme generates a dominance solvable game. 
Namely, the unique equilibrium is obtained by 
iterative elimination of strictly dominated strategies (see further details in Section \ref{sec:characterization}). 

In Section \ref{sec:single} we focus on single-enterprise networks. We characterize the optimal schemes, showing that in such networks 
they must exist. Additionally, we show that the problem of finding an optimal scheme for an enterprise with $n$ investors is NP-hard. 

Then, we return in Section \ref{sec:investment-networks} to the general network game. 
In studying network environments in terms of their susceptibility to systemic strategic risk our results highlight an important distinction between cyclic and acyclic networks. When the network is \emph{acyclic} we show that there exists an optimal collateral scheme for the entire market that can be represented as a collection of smaller schemes, one for each enterprise, such that the collaterals that each firm is promising its potential investors are identical to the ones that would have been made under the assumption that these investors have no external obligations and hence all the investors are capable to invest if provided the incentives to do so. 
Put differently, a firm designs its collateral scheme as if the entire market consists of itself and its group of potential investors, ignoring all the rest. 
We show how to calculate the optimal scheme and the order of iterated elimination of dominated strategies of acyclic networks. 
The order of iteration depends on the structure of the network, and in acyclic networks it is induced by a topological order on the firms with an enterprise. 
In particular, the iteration starts from investments of firms that do not hold their own enterprises, and
continues in a way that ensures that defaults are no longer a concern when each decision is taken.     
From the computational perspective, in acyclic networks with $n$ firms in which every enterprise has at most $d$ potential investors, 
we show that this solution can be obtained in $\mathcal{O}(2^d dn)$ time. In particular, it is polynomial when $d$ is a constant.   

Unfortunately, under \emph{cyclic} networks an overall optimal collateral scheme may {\em not} be reduced to a collection of optimal schemes -- one for each enterprise. Indeed we show that for some (simple) cyclic networks, the total collateral needed to prevent systemic strategic risk is arbitrarily higher than the total collateral firms would provide when considering only their own investors. The implication of this finding is that under cyclic networks the prevention of systemic strategic risk may not be decentralized, in the sense that it would require an outside entity to resolve the potential mis-coordination. 
Furthermore, there are cyclic networks in which full investment is an equilibrium, but it cannot be established as unique equilibrium with any collateral scheme. 
These results follow from the fact that providing a collateral to potential investors, ignoring the possibility of defaults, is not a sufficient instrument to make sure that all of them would invest. Finally, we show that cycles introduce further difficulty on the computational side: we show that finding an optimal collateral scheme in cyclic networks is NP-hard, even if a solution for each individual-firm problem is known. We prove this by presenting a reduction from the minimum feedback vertex set problem. For further details see Section \ref{sub-sec:general-graphs}.

\vspace{3pt}
\noindent
\textbf{Discussion of the literature:} 
In the case of a single enterprise our model is similar to \cite{Halac-et-al-2020}. The main difference between the models lies in the fact that here we secure investments through collaterals that are not paid on the equilibrium path instead of offering higher interest rates on investments. Another difference is in the causes for default. In \cite{Halac-et-al-2020} default is stochastic and is triggered when the total capital raised is below a random threshold. Here instead default occurs when the operational costs exceed the total invested capital. 
Our somewhat simpler single-enterprise model has been introduced 
in order to allow us to focus on investment networks, rather than only a single-enterprise problem.


A different single-enterprise investment game is studied in \cite{Ban-and-Koren-EC2020}. 
Our model and analysis differs in basic features. Most importantly, we study network environments with heterogeneous investors and consider a mechanism-design problem for these networks.


Systemic risk is widely studied in the context of inter-bank liability networks. We refer 
to two surveys on this literature \cite{caccioli2018network,jackson2020systemic} 
and mention below several works that are closer to our context.  
The survey \cite{jackson2020systemic}  
classifies the sources of systemic risk to two types: direct externalities,  
and self fulfilling feedback effects in systems with multiple equilibria. 
Our work is conceptually related to the later type, however, the equilibria usually studied in the systemic-risk literature are different than the concept that we consider. In liability-clearing models, an equilibrium relates to a fixed point of the clearing process. In this process, network vertices (banks) do not make decisions. In contrast, in our setting, the network of investment opportunities induces a game of the investors, and an equilibrium is a Nash equilibrium of the game. The problem we study is that of the design of incentives in the complex multi-player game induced by a networked environment of investment opportunities, so as to induce desired equilibrium outcomes. In this sense, our approach is more closely related to the fields of mechanism design and algorithmic mechanism design \cite{nisan2001algorithmic}, and specifically, to works in the mechanism-design literature that study models of minimal intervention for obtaining desired outcomes in games (e.g., \cite{tennenholtz-k-implementation-2004,tennenholtz-2009}).

In \cite{schuldenzucker_portfolio_EC2020}, the authors study the effect of eliminating network cycles via portfolio compression in banking networks 
under the model of \cite{rogers2013failure}. 
Their discussion of incentives is conceptually close to our approach, but the problems studied are different. 
The problem in \cite{schuldenzucker_portfolio_EC2020} considers systemic risk due to external shocks, in the spirit of \cite{eisenberg2001systemic,rogers2013failure}, while we study a strategic setting in which investors make decisions, and we focus on systemic risk resulting from strategic risk. 
Another difference between banking liability networks and the investment setting is that 
firms' liability to investors in our setting is more limited than that of banks.
This makes the investment network more susceptible to systemic risk, 
because milder obligation to compensate investors in case they have losses would mean less resources for these investors to return their own debts. 

The role of cycles in clearing problems is also discussed in \cite{jackson2019distorted}. 
Their model describes banking networks that in addition to liabilities include also equity connections, thus generalizing the models of \cite{eisenberg2001systemic,gai2010contagion,elliott2014financial}. The authors show that the existence of multiple solutions to the liability-equity clearing problem in their model is related to cyclic debt relations. Given that all investments have already been made, they use bailouts in retrospect to fix default cascades due to shocks. 
In contrast to the clearing problem of \cite{jackson2019distorted}, we study a game in which individuals make investment decisions, and the collaterals are designed to affect these investment decisions.

In \cite{schuldenzucker2020default} the liability-clearing problem of \cite{eisenberg2001systemic,rogers2013failure} is extended to discuss banking networks which also include derivative links in the form of credit default swaps (CDS), which have some similarity to collaterals. In a CDS an investor, or lender, buys an insurance-like contract from a third party seller for the case that the receiver of the investment defaults. In contrast to CDS contracts that are bought by investors, in our case a collateral is offered by the capital-raising firm to its potential investors to incentivize them to engage in the investment in the first place. Moreover, typically, a CDS is a more limited form of security than a collateral, as the execution of the contract depends on the solvency of the third party who sells the CDS. Indeed, in the context of systemic risk the triggering of a CDS contract due to a default may be correlated with a default of the CDS seller itself.  

An additional related line of work on inter-bank liability networks discusses endogenous network formation models \cite{babus2017endogenous,erol2018network,acemoglu2015systemic,babus2013formation,farboodi2014intermediation}. 
Key differences from our work are that we focus on strategic risk and on the systemic risk resulting from it, while these works consider systemic risk due to external shocks, and importantly, we study a mechanism design problem where collateral contracts are used for inducing an efficient equilibrium as a unique outcome. 

Collateral contracts are among the oldest and most basic forms of a financial securities.
In two classic works, \cite{bernanke1986agency,bernanke1990financial}, Bernanke and Gertler highlighted the role of collaterals in investment markets,  
showing how high collateral amounts induce higher investments and a decrease in collateral may trigger a decrease in investments. 
In their model the mechanism is that of agency costs for investors due to incomplete information; these costs increase when the firms raising capital have less collateral. In our model collateral contracts secure an explicit collateral per investor, and so a similar relation between collateral and investment exists even in a full information setting, due to strategic considerations of rational players. 

To our knowledge our work is the first to study collateral contracts as a mechanism-design tool in strategic settings for incentivizing investment and handling systemic strategic risk.

\section{Model}\label{sec:model-all}
\subsection{The Investment Game}\label{sec:model-invest}
Consider a network of 
investments with $n$ vertices where each vertex represents a firm. 
A firm may have an ``investment enterprise'' and it may also act as an ``investor agent'' that considers investments in enterprises of other firms in the network. 
Each investment firm which has an investment enterprise is connected with outgoing edges to a set of investor agents. 
A directed edge from vertex $k$ to vertex $i$ with weight $x_{ki} > 0$ represents an opportunity of agent $i$ to invest an amount $x_{ki}$ in the enterprise of $k$.
Investor agents decide whether to accept or decline their investments opportunities (for the entire amount $x_{ki}$).\footnote{
This modeling is similar to \cite{segal1999contracting,Halac-et-al-2020} 
and is meant to capture the fact that many investment campaigns involve the capital-raising firm 
fixing the desired investment from each investor in advance, 
depending on its financial needs and on investor constraints, 
as well as on the firm's desired allocation of control between investors.} 
A firm that has multiple incoming edges representing multiple investment opportunities that the firm considers makes this binary decision for each of its investments, and gains utility that equals the sum of utilities from all its investments. Every firm that has an investment enterprise has some operational cost, and if the firm does not raise sufficient capital to cover that cost, the firm defaults. 
A defaulting firm immediately withdraws all of its investments, with zero recovery, so it essentially makes no investments in any enterprise.
The network structure is assumed to be an exogenous input, representing a set of investment opportunities at a given time (say, at the beginning of the year).\footnote{The issue of endogenizing the network formation is an interesting one, but goes beyond the scope of the current paper.} 
To convince investors to invest, firms may offer collaterals to their potential investors, to reduce the strategic risk for these investors.
Next we give the formal definition of the model and of the game induced by the network of investments.

First let us define the basic components: 
There is a set $V$ of $n$ firms, with $[n]$ denoting firm indices. 
There is a graph $G = (V,E)$ describing the topology of investment interactions, with an edge $(k,i)\in E$ indicating that $i$ has an investment opportunity in $k$. Each edge $(k,i)\in E$ has a weight $x_{ki}$, specifying the amount $i$ can invest in $k$. For convenience we assume the investment opportunities are given by a matrix $\mathcal{X}$, which is 0 for every non-edge.\footnote{Although $\mathcal{X}$ provides the information of the topology of $G$, the separation in notation between $G$ and its weights $\mathcal{X}$ is useful for a discussion of network topology later on.}
Let $\mathcal{Z} = \{Z_i\}_{i=1}^n$, $\mathcal{A} = \{\alpha_i\}_{i=1}^n$ with $Z_i, \alpha_i \geq 0$ be enterprise $i\in V$ cost and interest rate parameters, as specified below.
Additionally, denote by $V_e=\{v \in V: \exists u \in V \ s.t. \ (v,u) \in E \}$ the set of \emph{enterprise vertices of $G$}, i.e., the subset of vertices that have potential investors (have a non-zero out-degree in $G$). For every vertex $i\notin V_e$, both $\alpha_i$ and $Z_i$ are irrelevant and can w.l.o.g. be set to $0$.
The model is as follows.  
\begin{itemize}
\item Let $X_k$ denote the sum of investment opportunities in enterprise $k$: $X_k = \sum_{i} x_{ki}$.
\item Each firm $i$ with an \emph{investment opportunity $x_{ki}>0$} in $k$ ($(k,i) \in E$) chooses one of two actions: (i) invest the full amount $x_{ki}$ in $k$ and wait to receive returns at the end of the year (\emph{cooperate} w.r.t. $k$); or (ii) decline the investment opportunity in $k$ (\emph{defect} w.r.t. $k$). Notice that agents with multiple investment opportunities (incoming edges) make such a decision for each one separately.
\item The operational cost of enterprise $k$ is captured by the cost parameter $Z_k\geq 0$.   
\item Denote the set of defectors w.r.t. $k$ by $D_k$ and the set of cooperators w.r.t. $k$ by $C_k$. An edge $(k,i)$ is called a \emph{cooperate edge} if $i \in C_k$ and a \emph{defect edge} if $i \in D_k$.
\item At the beginning of the year, each defector $i \in D_k$ keeps the full amount $x_{ki}$.\footnote{To simplify the analysis we assume no interest is given on funds not invested. Adding such an interest will not change the model significantly, and our results can be easily adjusted appropriately.}
Firms may reach a state of default if they do not raise sufficient investments to cover the operational costs of their enterprises.
Denote by $I_k$ the set of players that are in $C_k$ and not in default, these players \emph{invest} in enterprise $k$.
If the total investment in $k$ does not cover the operational cost ($\sum_{i \in I_k} x_{ki} < Z_k$), then the enterprise cannot be completed and we say that $k$ has \emph{defaulted}.
If firm $k$ has defaulted then it withdraws all its investments with zero recovery. 
The set of investors $I_k$ is induced by the strategy profile played by all players (as specified by Definition \ref{default-determination} below). 
An edge $(k,i)\in E$ is called an \emph{invest edge} if $i \in I_k$. 
\item If $k\in V_e$ is not in default, the total net invested amount (after deducting the operational cost) 
accrues interest of $\alpha_k > 0$; that is, the total amount to be distributed among the investors in $k$ is $(1+\alpha_k)(\sum_{i\in I_k} x_{ki} - Z_k)$. 
The \emph{enterprise return} $R_{ki}$ from $k$ to an investor $i\in I_k$ is then $i$'s proportional share of the total return from $k$: 
	{\small
	\begin{equation}\label{R_i(II)}
	R_{ki} = 
	\max \bigg(0, 
	\underbrace{
	(1+\alpha_k)\bigg(\sum_{\substack{j \in I_k}}x_{kj} - Z_k\bigg)
	}_{\text{Total net return}}
	\hspace{-8pt}
	\underbrace{
	\frac{x_{ki}}{\sum_{\substack{j \in I_k \vspace{5pt}}}x_{kj }} 
	}_{\text{$i$'s proportional share}}
	\hspace{-8pt}
	\bigg)
	\end{equation}
	}	
\item Let $c_{ki}\geq 0$ be a collateral given to player $i$ for the investment $x_{ki}$. If the enterprise return $R_{ki}$ for player $i \in I_k$ is lower than the investment $x_{ki}$, then the collateral $c_{ki}$ can be realized so that the utility of player $i$ from $k$ 
is $R_{ki} + c_{ki}$, capped at $x_{ki}$, i.e., $\min (R_{ki} + c_{ki}, x_{ki})$.
In case player $i$ has profits (i.e., $R_{ki} > x_{ki}$) the collateral is not realized.
Thus, the utility from $k$ for a player that defects ($i \in D_k$) is $x_{ki}$,
her utility if she cooperates yet in default ($i\in C_k\setminus I_k$) is $0$,
and her utility if she invests ($i \in I_k$) with collateral $c_{ki}$ is:
{\small
\begin{equation}\label{eq:cooperate-utility}
		\begin{array}{lr}
			U_{ki} = \left\{\begin{array}{lr}
						c_{ki} + R_{ki}, & \text{ } R_{ki} \leq x_{ki} - c_{ki}\\
						x_{ki},           & \text{ } x_{ki}-c_{ki}\ < R_{ki} \leq x_{ki}\\
						R_{ki},       & \text{ } x_{ki} < R_{ki} \end{array}\right.
		 \end{array}
	\end{equation}
}
The total utility of each player $i$ is $U_i = \sum_{k\in [n]} U_{ki}$, where $U_{ki} = 0$ if $x_{ki} = 0$ 
(that is, if $(k,i)\notin E$).
\end{itemize}

\noindent
Given a set $C$ of \emph{cooperate edges}, the set of firms in default and the induced set of \emph{invest edges} are determined in the following process: 
\begin{definition}\textbf{(default determination)}\label{default-determination}:
Initialize $T=\emptyset$ as the set of firms in default. Assume that $I=C$ are the tentative invest edges. (i) For each firm $k\in V_e$, if it is in default when the set of investments is $I$, add the firm to the set $T$ and remove from $I$ all edges pointing to this firm; (ii) Repeat until no more firms are added to $T$.  
\end{definition}

Finally, Let $\textbf{c}$ be the matrix of collaterals given to the players for each of their investments, where $c_{ki}$ is the collateral given to player $i$ for an investment of an amount $x_{ki}$ in enterprise $k$, and if $x_{ki} = 0$ then $c_{ki}$ is defined to be zero. 
The \emph{collateral matrix} $\textbf{c}$ together with the inputs $\left\{n, \mathcal{X}, \mathcal{Z}, \mathcal{A} \right\}$ induce a simultaneous-move full-information game with the $n$ investors as players.

\subsection{The Mechanism-Design Problem}\label{sec:design}
In this section we define the mechanism design problem of finding optimal collaterals to induce the efficient level of corporation as a unique equilibrium (when it is possible) in the investment network model described in the previous section. We also provide some definitions that will be useful later. 
 
We assume that ties are broken towards investing, that is, if given fixed actions of the other players a player that has equal utility from investing 
(playing \emph{cooperate} and not defaulting) and from defecting, 
then the player strictly prefers to invest.\footnote{Note that defaulting results with zero utility thus is always dominated by defecting.} 
This simplifies the writing as players can never be indifferent between the two actions. Formally this tie-breaking rule is equivalent to adding a fixed infinitesimal utility to  cooperating.\footnote{A different interpretation is that an equality in utilities from defecting and cooperating occurs when the worst-case utility of cooperating after an iterated elimination of strictly dominated strategies equals the best-case utility of defecting, and so players may be slightly optimistic regarding the actions of their peers -- as long as this optimism does not carry any risk.} 

Clearly an enterprise should not take place if, even assuming that all investors cooperate, the return is not sufficient to distribute to investors their original invested amount.
Thus, it is efficient that only enterprises that are profitable will take place (and others do not), and so we 
focus on the subset of profitable enterprises, ignoring unprofitable enterprises (by removing them from the network). Thus, in the rest of the paper we assume that all investment enterprises are profitable when all players cooperate: {\small$\forall k\in V_e, \ (1+\alpha_k)(X_k - Z_k) \geq X_k$}. 
It is easy to see that in such a case, everyone cooperating is a Nash equilibrium, and no collaterals are realized if all players cooperate. 
 
\begin{observation} \textbf{(profitable investments)}:\label{cooperate-equilibrium}
	Assume that {\small $\forall k\in V_e, \ (1+\alpha_k)(X_k - Z_k) \geq X_k$} (all enterprises are potentially profitable). Then for every collateral matrix the strategy profile in which all players cooperate is a Nash equilibrium, and no collateral is realized in equilibrium.
\end{observation}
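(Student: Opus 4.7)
The plan is to verify both assertions by direct computation, using the profitability inequality to show first that no defaults occur on the ``all cooperate'' profile, and then that deviating is never profitable.

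First I would fix the profile in which every player cooperates on every edge and analyze enterprise returns. Since $\alpha_k \geq 0$, the hypothesis $(1+\alpha_k)(X_k-Z_k) \geq X_k$ implies $X_k \geq Z_k$, so by Definition \ref{default-determination} no firm defaults, and therefore $I_k$ consists of \emph{all} incoming edges to $k$ with $\sum_{i \in I_k} x_{ki} = X_k$. Substituting into \eqref{R_i(II)} gives
\[
R_{ki} \;=\; (1+\alpha_k)(X_k - Z_k)\,\frac{x_{ki}}{X_k} \;\geq\; X_k \cdot \frac{x_{ki}}{X_k} \;=\; x_{ki},
\]
where the inequality uses profitability. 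Thus for every cooperate edge we land in the third branch (or the boundary of the second branch) of \eqref{eq:cooperate-utility}, so $U_{ki} = R_{ki}$ and no collateral is realized — which is the second claim of the observation.

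Next I would verify the Nash property by considering an arbitrary unilateral deviation of some player $i$ who switches from cooperate to defect on some subset of her edges. On any edge $(k,i)$ she switches, her contribution to $U_{ki}$ changes from $R_{ki} \geq x_{ki}$ to exactly $x_{ki}$, a weak decrease. On the edges $(m,i)$ where $i$ still cooperates, her utility $U_{mi}$ either remains $R_{mi}$ (if $m$ does not default) or drops to $0$ (if a cascade triggered by $i$'s defection causes $m$ to default). Hence the deviation weakly decreases $U_i$ term-by-term. By the tie-breaking rule that favors investing, cooperate is a strict best response, so the all-cooperate profile is a Nash equilibrium.

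I do not anticipate any real obstacles here: the profitability inequality collapses directly to $R_{ki} \geq x_{ki}$, and the only mildly delicate point is accounting for possible default cascades triggered by a player's own deviation. Handling that cleanly amounts to observing that such cascades can only decrease the deviator's utility from her remaining cooperate edges, so they reinforce rather than threaten the Nash property.
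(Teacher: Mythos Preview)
Your argument is correct and complete; the paper itself states this result as an observation without giving a proof, so there is nothing to compare against. One small imprecision worth tightening: in your cascade analysis you write that $U_{mi}$ ``either remains $R_{mi}$ (if $m$ does not default) or drops to $0$ (if \ldots\ $m$ defaults)'', but neither clause is quite right --- even when $m$ stays solvent the cascade can shrink $I_m$ and hence lower $R_{mi}$, and when $m$ does default the investor $i$ (if not herself in default) still receives $c_{mi}$ rather than $0$. The clean statement is simply that the deviation shrinks the cooperate set, so by monotonicity (Proposition~\ref{thm:monotonicity}) $R_{mi}$ weakly decreases, and since $U_{mi}$ in \eqref{eq:cooperate-utility} is weakly increasing in $R_{mi}$ (and equals $0$ if $i$ herself defaults), $U_{mi}$ weakly decreases in every case --- which is exactly what you need and what your concluding paragraph already asserts.
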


Yet, this Nash equilibrium might not be unique, and all agents defecting might also be an equilibrium.\footnote{E.g., when the cost of an enterprise exceeds the amount of any single investment opportunity in that enterprise.} 
To avoid this bad equilibrium, we are interested in finding collaterals that will induce all-cooperation as the \emph{unique} equilibrium, whenever possible.

We focus on collateral matrices that induce all-cooperating as a unique Nash equilibrium, 
we call such matrices \emph{viable collateral matrices}.  
\begin{definition} \textbf{(viable collateral matrix)}: \label{viable-collaterals}
The collateral matrix $\textbf{c}$ is called a \emph{viable collateral matrix} if the game induced by $\textbf{c}$ with inputs  
$\left\{n, \mathcal{X}, \mathcal{Z}, \mathcal{A} \right\}$  has a \emph{unique} Nash equilibrium where all the players cooperate.
\end{definition}

In this paper we consider the  problem of  finding a viable collateral matrix that has minimal total cost, when exists. 
\begin{definition} \textbf{(the collateral-minimization problem)}: \label{min-collaterals-problem}
Let \textbf{$\mathcal{C}$} denote the set of all viable collateral matrices given the input $\left\{n, \mathcal{X}, \mathcal{Z}, \mathcal{A} \right\}$. 
The \emph{collateral minimization problem} asks to find a viable collateral matrix, if such exists,  
with minimum sum of collaterals (1-norm):
{\small
\begin{equation}\label{eq:min-collaterals-problem}
	\textbf{c}^* \in  \argmin_{\textbf{c} \in \textbf{$\mathcal{C}$}} \sum_{i,j} c_{ij} 
\end{equation}
}

\noindent
We call the collaterals of such a solution $\textbf{c}^*$ \emph{optimal collaterals}. Inputs for which no viable collateral matrices exist can be identified in polynomial time, see Proposition \ref{thm:unique-equilibrium-existence}.
\end{definition}
Note that since  $\mathcal{C}$ is an infinite set, Equation (\ref{eq:min-collaterals-problem}) should be written as an infimum, yet we write it as minimum instead of infimum
since our tie-breaking rule ensures that the infimum is obtained. 

To solve the collateral minimization problem it will be useful to consider the following matrices which 
for any investment opportunity promise the minimal collateral that is necessary 
 to incentivize the player to cooperate, given that all other collaterals are fixed.

\begin{definition} \textbf{(minimal collateral matrix)}: \label{minimal-collaterals}
	The collateral matrix $\textbf{c}$ is called a \emph{minimal collateral matrix} if: (i) $\textbf{c}$ is a viable collateral matrix, and, (ii) $\forall i$, $\forall \varepsilon > 0$ the matrix $\textbf{c} - \varepsilon \hat{e}_{ij}$ is not a viable collateral matrix, where $\hat{e}_{ij}$ is the $n \times n$ matrix with all entries equal zero except the $i,j$'th entry which equals one.  
\end{definition}
Notice that a minimum norm \emph{viable} collateral matrix must also be a \emph{minimal collateral matrix} and hence a solution for the collateral minimization problem is a minimal collateral matrix that minimizes the sum of collaterals. 

The following definitions distinguish between two types of collaterals, which will be useful later on for characterizing the optimal collaterals. Each collateral can be either a ``full collateral'', covering the full investment of a player, or a ``partial collateral'' which guarantees a minimum payoff that is lower than the investment. 
\begin{definition} \textbf{(full collateral)}: \label{full-collateral}
	For investment opportunity $x_{ki} > 0$,  collateral $c_{ki}$ is called \emph{full collateral} if $c_{ki} = x_{ki}$.
\end{definition}
\begin{definition} \textbf{(partial collateral)}: \label{partial-collateral}
	For investment opportunity $x_{ki} > 0$, collateral $c_{ki}$ is called a \emph{partial collateral} if it is not a full collateral (notice that this includes zero collaterals).
\end{definition}
%
\section{Characterization}\label{sec:characterization}

Here we characterize several basic properties of the collateral minimization problem.
We first show that 
existence of a unique equilibrium implies dominance solvability. We then present a polynomial time procedure to find for a given network whether a unique equilibrium can be induced by collateral contracts. Finally,  
we characterize the conditions under which a player will prefer to cooperate even with a zero collateral, and the conditions under which a player will prefer to cooperate only when provided with a full collateral. 
%
\subsection{Monotonicity and Dominance Solvability}
An important property of investments and of their associated strategic risk, is that investment gains increase as additional potential investors decide to invest, and decrease as cooperating investors stop investing. More specifically, for any player that invests in an enterprise for a given cooperation profile on all other investments opportunities, her gains can only increase as decisions are turned from \emph{defect} to \emph{cooperate}. This is shown in the following proposition.  
\begin{proposition}\label{thm:monotonicity} \textbf{(monotonicity)}:
\begin{enumerate}
\item For any investment opportunity $x_{ki}$ of agent $i$ in enterprise $k$, the enterprise return to agent $i$ is a monotone increasing function 
of the set of investment opportunities in the network in which players cooperate.
I.e., if $R_{ki}(C)$ denotes the enterprise return of $i$ that invests in enterprise $k$ ($i\in I_k$) when $C$ is the set of cooperate edges, then for every set $B\supset A$ it holds that $R_{ki}(B) \geq R_{ki}(A)$.
	
\item Fix any subset $A$ of edges, any enterprise $k\in V_e$, and any player $i$ such that $(k,i) \in E$ and $(k,i) \notin A$. Assume that when the set of cooperate edges is $A$ then player $i$ strictly prefers to cooperate over defecting w.r.t. $k$. Then, whenever the set of cooperate edges is $B\supset A$ player $i$ strictly prefers to cooperate over defecting w.r.t. $k$ as well.
\end{enumerate}
\end{proposition}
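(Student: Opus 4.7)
The plan is to reduce both parts of the proposition to a single key lemma: the default-determination map $C \mapsto T(C)$ (where $T(C)$ is the set of firms in default induced by the set $C$ of cooperate edges) is antitone, i.e.\ $C \subseteq C' \Rightarrow T(C) \supseteq T(C')$, or equivalently the investor sets satisfy $I_k(C) \subseteq I_k(C')$ for every enterprise $k$. Intuitively, additional cooperate edges can only help firms cover their operational costs and therefore can only shrink the cascade of defaults.

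To prove this lemma I would characterize the non-defaulted set $\bar T(C) := V_e \setminus T(C)$ as the greatest fixed point of the operator $F_C : 2^{V_e} \to 2^{V_e}$ given by
\[
F_C(S) = \Bigl\{\, k \in V_e : \sum_{j : (k,j) \in C,\ j \in S \cup (V\setminus V_e)} x_{kj} \geq Z_k \,\Bigr\}.
\]
The operator $F_C$ is monotone in both $S$ and $C$, so $F_{C'}(\bar T(C)) \supseteq F_C(\bar T(C)) = \bar T(C)$, making $\bar T(C)$ a post-fixed point of $F_{C'}$; Knaster--Tarski then yields $\bar T(C) \subseteq \bar T(C') = \mathrm{GFP}(F_{C'})$. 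A direct alternative is to run the iterative procedure of Definition~\ref{default-determination} on $C$ and $C'$ in parallel and show $T_t(C) \supseteq T_t(C')$ by induction on the iteration $t$: if $k$ becomes defaulted under $C'$ at step $t+1$, its $C'$-investments with non-defaulted targets fall short of $Z_k$, and since $C \subseteq C'$ and $T_t(C) \supseteq T_t(C')$ by induction, the same shortfall occurs under $C$.

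Part~1 then follows immediately. Assuming $i \in I_k(A)$ and $A \subseteq B$, the lemma gives $i \in I_k(B)$ and $s_k(A) \leq s_k(B)$, where $s_k(C) := \sum_{j \in I_k(C)} x_{kj}$. Rewriting (\ref{R_i(II)}),
\[
R_{ki}(C) = (1+\alpha_k)\, x_{ki}\,\Bigl(1 - \tfrac{Z_k}{s_k(C)}\Bigr),
\]
which is non-decreasing in $s_k(C)$ whenever $s_k(C) \geq Z_k$ (automatic for $i \in I_k(C)$). Hence $R_{ki}(A) \leq R_{ki}(B)$.

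For Part~2 the argument is a short monotonicity chain. From (\ref{eq:cooperate-utility}) the cooperate utility $U_{ki}$ is non-decreasing in $R_{ki}$. Strict preference of cooperate over defect on $k$ under $A$ means $U_{ki}(A \cup \{(k,i)\}) \geq x_{ki}$, with either strict inequality or equality resolved by the tie-breaking rule, the latter requiring $i$ to be non-defaulting when cooperating. Passing to $B \supset A$, the default lemma guarantees $i$ is still non-defaulting under $B \cup \{(k,i)\}$, and Part~1 yields $R_{ki}(B \cup \{(k,i)\}) \geq R_{ki}(A \cup \{(k,i)\})$, so $U_{ki}(B \cup \{(k,i)\}) \geq U_{ki}(A \cup \{(k,i)\}) \geq x_{ki}$ and the strict preference is preserved. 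The main technical step is the antitonicity of the default set via the fixed-point characterization; once it is in place the remainder reduces to direct algebra on (\ref{R_i(II)}) and (\ref{eq:cooperate-utility}).
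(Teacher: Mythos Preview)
Your proposal is correct and follows essentially the same route as the paper: both proofs hinge on the single observation that enlarging the cooperate set $C$ can only enlarge each investor set $I_k(C)$ (equivalently, shrink the default set), after which Parts~1 and~2 drop out of the explicit monotonicity of $R_{ki}$ in $\sum_{j\in I_k}x_{kj}$ and of $U_{ki}$ in $R_{ki}$. The only difference is rigor: the paper asserts the $I_k$-monotonicity in one line (``additional cooperate actions cannot turn an enterprise that did not default to a defaulting enterprise''), whereas you actually prove it, either via the greatest-fixed-point characterization and Knaster--Tarski or by induction on the rounds of Definition~\ref{default-determination}---a welcome addition, since the cascading nature of the default process makes the one-line justification less obvious than it looks.
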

\begin{proof}
First, notice that a superset of cooperate edges in the entire network induces a superset of cooperate edges from every individual enterprise vertex.
Second, a superset of cooperate edges induces a superset of invest edges, since additional cooperate actions cannot 
turn an enterprise that did not default to a defaulting enterprise. 
Hence the set of investors in enterprise $k$ when $B$ are the cooperate edges, $I_k(B)$, is a superset of the set of investors in $k$ when $A\subset B$ are the cooperate edges, $I_k(A)$.
For a given set of investors $I_k$ in enterprise $k$ let $g(I_k) = \sum_{\substack{j \in I_k}}x_{kj}$. 
The enterprise return of a player $i \in I_k$, 
as given by Equation (\ref{R_i(II)}), 
is a weakly increasing function of $g(I_k)$. 
Since $\forall k,j, \ x_{kj} \geq 0$ and $I_k(A) \subseteq I_k(B)$, it follows that $g(I_k(B)) \geq g(I_k(A))$, and hence $R_{ki}(B) \geq R_{ki}(A)$, which concludes the proof of $(1)$.
Under the second condition, the utility of player $i$ from enterprise $k$ if she decides to cooperate, as given by Equation (\ref{eq:cooperate-utility}), is a weakly increasing function of $R_{ki}$ and hence the utility of player $i$ from cooperating can only increase as more players cooperate. The utility of the alternative action (\emph{defect}) is constant. Thus, if \emph{cooperate} is a best reply of $i$ when $A$ is the set of cooperate edges, it is also a best replay when $B \supset A$ is the set of cooperators. Notice that under condition $(2)$ player $i$ is not in default, or otherwise she would prefer to defect.  
\end{proof}

A direct corollary is that monotonicity also holds in the inverse direction: If \emph{defect} is a best reply of a player when the players cooperate in some set of other investment opportunities, then \emph{defect} is a best reply also when players cooperate only in a subset of these opportunities.

The next result shows that for a collateral matrix that induces a game with a unique Nash equilibrium in which all players cooperate, the induced game is dominance solvable, and so this equilibrium can be reached by a series of iterated eliminations of dominated strategies (see Definition \ref{dominance-solvable game}). 
I.e., there is at least one player with a strictly dominant strategy to cooperate, at least one player that strictly prefers cooperating given the dominant strategy of the first player, and so on, until the strategy profile of \emph{all-cooperate} is reached. 
For completeness, we start with the definition of a dominance solvable game.  
\begin{definition} \textbf{(dominance solvable game)}: \label{dominance-solvable game}
A game is called \emph{dominance-solvable} if there exists an order $\sigma$ over the strategies of the players of iterated elimination of strictly dominated strategies which leads to a single strategy profile. 
\end{definition}

\begin{proposition}\label{thm:uniqueness} \textbf{(unique equilibrium condition)}:
If the game induced by the collateral matrix $\textbf{c}$ with input $\left\{n, \mathcal{X}, \mathcal{Z}, \mathcal{A}\right\}$ has a unique Nash equilibrium in which all players cooperate in all their investment opportunities then the game is dominance solvable. 
\end{proposition}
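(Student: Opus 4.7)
The plan is to build the elimination order inductively. I construct a nested sequence of edge sets
\[
\emptyset = A_0 \subseteq A_1 \subseteq \cdots \subseteq A_m = E,
\]
where at each step $s$ some edge $(k,i)\notin A_s$ is added because, in the game that remains after eliminating all player-strategies that defect on any edge in $A_s$, cooperating on $(k,i)$ is a strictly dominant move for player $i$. Since each player's full strategy is a vector of binary edge-decisions and payoffs decompose as $U_i=\sum_k U_{ki}$, eliminating ``defect on $(k,i)$'' amounts to removing every $i$-strategy whose $(k,i)$-coordinate is \emph{defect}; once every edge is in $A_m$, each player is left with the unique strategy that cooperates everywhere, which is exactly dominance solvability in the sense of Definition \ref{dominance-solvable game}.

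The inductive engine combines uniqueness of the all-cooperate equilibrium with the monotonicity of Proposition \ref{thm:monotonicity}(2). Suppose $A_s \subsetneq E$. Consider the profile in the remaining game in which the $A_s$-edges cooperate (forced) and all other edges defect. This profile cannot be a Nash equilibrium of the remaining game, for any NE of the remaining game lifts to an NE of the original game (strictly dominated strategies are never best responses, so iterated strict-dominance elimination preserves the NE set), and all-cooperate is by assumption the unique NE. Hence some player $i$ has a profitable deviation on an edge $(k,i)\notin A_s$, i.e., $i$ strictly prefers to cooperate on $(k,i)$ when the set of cooperate edges is exactly $A_s$.

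Now I apply Proposition \ref{thm:monotonicity}(2): the preference for cooperation on $(k,i)$ extends to every superset $B \supseteq A_s$ not containing $(k,i)$. In the current remaining game, the $A_s$-edges are forced to cooperate, so no matter what the other players (and $i$ herself on her other edges) do, the realized set of cooperate edges is some such $B$. Consequently, cooperating on $(k,i)$ strictly dominates defecting on $(k,i)$ in the remaining game, and we may legitimately add $(k,i)$ to $A_{s+1}$. Since $|E|$ is finite, the process terminates at $A_m=E$, producing the required elimination order.

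The point I expect to be most delicate is verifying that the ``unique NE equals all-cooperate'' property is truly inherited by the successive restricted games. Formally this requires observing that (i) all-cooperate is available in every restricted game (trivial, as it cooperates on $A_s$), and (ii) any other NE of a restricted game would, when reinterpreted in the original game, still be an NE there---which relies on the fact that each commitment in $A_s$ was made via strict dominance, so no player can profit by defecting on an $A_s$-edge even in the original unrestricted game. Once this bookkeeping is in place, the monotonicity step is immediate from Proposition \ref{thm:monotonicity}(2), and the induction goes through cleanly.
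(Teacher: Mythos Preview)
Your proof is correct and follows essentially the same approach as the paper's: both start from the profile where only the already-secured edges cooperate, use uniqueness of the all-cooperate equilibrium to find an edge on which some player strictly prefers to cooperate, and then invoke the monotonicity of Proposition~\ref{thm:monotonicity}(2) to conclude that this preference persists for all larger cooperate sets, yielding the next elimination step. Your framing via the restricted game and the preservation of Nash equilibria under iterated strict dominance is slightly more formal than the paper's direct argument in the original game, but the underlying logic is identical.
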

\begin{proof} 
Assume that \emph{all-cooperate} is the unique Nash equilibrium and consider the strategy profile where all players play \emph{defect} in all their investment opportunities (incoming edges). 
Then, there must be an edge, which we will denote by $\sigma_1$, corresponding to a decision of a player in which the player strictly prefers to cooperate,  
or otherwise \emph{all-defect} would also be a Nash equilibrium in contradiction to our assumption. 
Notice that this player cannot be in default for this strategy profile, or otherwise she would prefer to defect. 
After eliminating the \emph{defect} strategy for $\sigma_1$, there must be an edge $\sigma_2$ in which the player strictly prefers to cooperate, otherwise the strategy profile of \emph{all-defect} except $\sigma_1$ would be a Nash equilibrium. 
Due to Proposition \ref{thm:monotonicity}, if the player of edge $\sigma_2$ cooperates, the player of $\sigma_1$ still strictly prefers to cooperate.
After eliminating the \emph{defect} strategy  for $\sigma_2$ we can similarly eliminate the \emph{defect} strategy for a third edge, $\sigma_3$, and again, by  Proposition \ref{thm:monotonicity}, the players of the previous edges, $\sigma_1$ and $\sigma_2$, will still strictly prefer to cooperate when there are more cooperate edges. 
The same argument holds until the last edge. Thus, if \emph{all-cooperate} is the unique Nash equilibrium, due to monotonicity as stated in Proposition \ref{thm:monotonicity}, there exists an ordering of the edges $\sigma_1,...,\sigma_{|E|}$ of iterated elimination of strictly dominated strategies that leads to the \emph{all-cooperate} strategy profile. 
\end{proof}

\subsection{Unique Equilibrium through Collaterals}\label{sec:unique-eq-through-collaterals}
The interest of the collateral minimization problem is in finding optimal collaterals to induce a unique equilibrium of full investment; this rather strong solution concept as a mechanism-design objective is aimed to assure that there are no 
mis-coordination issues between the players and that the game reaches an efficient outcome. Seeking a unique equilibrium is well motivated
by empirical literature showing how players often converge to inefficient equilibria in the presence of strategic risk (see e.g., \cite{Brandts-and-Cooper-AER-2006}).
Interestingly, unlike the non-network case (where each enterprise has its own set of investors), there are network structures in which a unique equilibrium cannot be induced by collaterals. These cases occur in cyclic investment networks, and are due to a special type of cyclic structures in the network. 
Our characterization shows how to identify 
such cases and to isolate these problematic structures. 
Specifically, in single-enterprise networks as well as in all directed acyclic networks, 
there are no problematic structures and a unique equilibrium can always be induced by collaterals. 
The following result, Proposition \ref{thm:unique-equilibrium-existence}, presents a constructive characterization of investment networks in which a unique equilibrium can be induced by means of collaterals. This is done by an iterative process that runs in polynomial time. 
\begin{proposition}\textbf{(solvability by collateral contracts)}:\label{thm:unique-equilibrium-existence}
Consider the following iterative process: If there exists an enterprise $k$ in the graph such that the total investments from non-enterprise vertices (spikes) in this enterprise is at least the cost of that enterprise ($Z_k$), remove this enterprise and its non-enterprise investor vertices and incident edges 
and replace every investment opportunity that $k$ has by the same opportunity from a new (spike) vertex. Then, reiterate on the new graph. 
It is possible to induce a unique equilibrium by collaterals if and only if the above process ends up with the empty graph.
\end{proposition}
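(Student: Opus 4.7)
My plan is to prove the biconditional by handling each direction separately, leveraging the iterated-elimination characterization of unique equilibria from Proposition \ref{thm:uniqueness}.

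For the forward direction, suppose the process terminates with the empty graph, and let $k_1,\ldots,k_m$ be the enterprises in the order they are removed. I will exhibit a viable collateral matrix by setting $c_{ki}=x_{ki}$ (full collateral) on every edge $(k,i)\in E$ and then showing the induced game is dominance solvable with all-cooperate as the single surviving profile. The iterated elimination mirrors the process via induction on $t$: at step $t$ I eliminate the defect strategy on every edge $(k_t,i)$ where $i$ is an original non-enterprise investor and on every edge $(k_t,k_s)$ with $s<t$. For an original spike $i$, since $i$ cannot itself default, cooperating on $(k_t,i)$ yields $\min(R_{k_t i}+c_{k_t i},x_{k_t i})\geq x_{k_t i}$ when $k_t$ is solvent and $\min(c_{k_t i},x_{k_t i})=x_{k_t i}$ when $k_t$ defaults, matching the defection utility $x_{k_t i}$ and hence strictly dominating it by the tie-breaking rule. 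For a virtual spike $k_s$ with $s<t$, the induction hypothesis guarantees $k_s$ is not in default in any surviving profile, so the same utility bound applies to $k_s$'s decision on $(k_t,k_s)$. After these eliminations the guaranteed investment in $k_t$ is at least $Z_{k_t}$---precisely the condition under which the process removed $k_t$---so $k_t$ itself is not in default in any surviving profile, extending the induction hypothesis. Edges of the form $(k_t,k_u)$ with $u>t$ are handled symmetrically at step $u$: once step $u$'s induction establishes $k_u$'s non-default, the same full-collateral argument eliminates defect on $(k_t,k_u)$ as well. After step $m$ only the all-cooperate profile survives, so by Proposition \ref{thm:uniqueness} it is the unique Nash equilibrium and the constructed matrix is viable.

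For the reverse direction, suppose the process is stuck at a non-empty graph $G^*$ with enterprise vertex set $V^*\subseteq V_e$. By the stopping rule, for every $k\in V^*$ the total spike investment in $G^*$ is strictly less than $Z_k$; in the original game this sum equals the amount invested in $k$ by non-enterprise investors plus the amount invested in $k$ by removed enterprises $k'\notin V^*$. Fix any collateral matrix $\textbf{c}$ and consider the strategy profile in which every enterprise $j\in V^*$ plays defect on every edge $(k,j)\in E$ (on all of $j$'s investment decisions), while the actions of players outside $V^*$ and of non-enterprise investors are completed to any Nash equilibrium of the residual game (whose existence follows from the monotone best-response structure of Proposition \ref{thm:monotonicity}). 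Under this profile each $k\in V^*$ receives total investment at most equal to its spike-in-$G^*$ value, because the $V^*$-investors contribute zero and each non-$V^*$ investor $i$ of $k$ contributes at most $x_{k,i}$; since this spike-in-$G^*$ value is strictly less than $Z_k$, the enterprise $k$ defaults. Consequently every $j\in V^*$ is itself in default, so for every edge $(k,j)\in E$ cooperation yields utility $0$ (by the $C_k\setminus I_k$ clause of the utility definition) while defection yields $x_{k,j}>0$; defect is therefore strictly preferred, confirming that $V^*$'s actions are best responses and the profile is a Nash equilibrium. Since $V^*\neq\emptyset$ this equilibrium differs from all-cooperate, showing $\textbf{c}$ is not viable; as $\textbf{c}$ was arbitrary, no viable collateral matrix exists.

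The main obstacle is the apparent circularity in the reverse direction: showing each $k\in V^*$ defaults requires its $V^*$-enterprise investors to withhold investment, yet their preference for defecting presupposes that their own enterprises are in default. The resolution comes from the model's stipulation that cooperation by an investor whose own firm has defaulted yields utility $0$ regardless of any collateral. Once the joint default of $V^*$ is tentatively posited, this stipulation makes defect unconditionally strictly preferred for every $V^*$-investor on every one of her investment decisions, which in turn confirms the joint default---producing a consistent fixed-point and a genuine Nash equilibrium distinct from all-cooperate.
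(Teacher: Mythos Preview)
Your forward direction is correct and mirrors the paper's: both use full collaterals and show that iterated elimination of dominated strategies, processed in the order in which enterprises are removed by the iterative process, reaches all-cooperate. Your induction on $t$ makes explicit what the paper sketches.

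Your reverse direction takes a genuinely different route. The paper first proves a separate lemma (Lemma~\ref{thm:viable-collateral-matrices-existence}) that a viable collateral matrix exists if and only if the \emph{full} collateral matrix $\textbf{c}=\mathcal{X}$ yields all-cooperate via iterated elimination; it then argues that the iterative process of the proposition is step-for-step equivalent to iterated elimination under full collaterals, so the process gets stuck exactly when full collaterals fail. Thus the paper never constructs a second Nash equilibrium directly; it simply notes (via Proposition~\ref{thm:uniqueness} and Lemma~\ref{thm:non-satisfiable-sub-graphs}) that if elimination stalls on a subgraph, no edge of that subgraph can ever be first to be eliminated, for any collateral matrix. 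Your approach instead fixes an arbitrary $\textbf{c}$, freezes $V^*$ at defect, and completes the remaining players to a Nash equilibrium of the residual game. This is valid---the game is supermodular by Proposition~\ref{thm:monotonicity}, so a pure equilibrium of the residual game exists by Tarski's fixed-point theorem---but you should say that explicitly rather than only gesturing at ``monotone best-response structure.'' The paper's modular route is cleaner because it avoids constructing and verifying a full equilibrium profile; your route is more self-contained but carries more verification burden.

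One small gap: you assert that the spike-in-$G^*$ total for $k\in V^*$ equals the sum of investments into $k$ from non-enterprise vertices plus removed enterprises. This need not hold literally under the process as stated, because when an enterprise $k_s$ is removed the process also removes its non-enterprise investors and their incident edges, so a spike $i$ that invested in both $k_s$ and $k$ disappears from $G^*$ and its contribution $x_{ki}$ is no longer counted there, while in the original game $i$ may still invest in $k$. If such a shared spike exists, your bound ``investment in $k$ is at most the spike-in-$G^*$ value'' can fail. This is an edge case in the process description (and the paper's own proof glosses over it), but you should either note the harmless assumption that non-enterprise vertices can be split so each has a single investment, or replace the offending inequality by the paper's cleaner argument that no edge with investor in $V^*$ can be first in any elimination order.
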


Figure \ref{fig:iterative-process} presents an example illustrating one step of the iterative process. The proof of the proposition as well as further details can be found in Appendix \ref{appendix:existence-of-viable-collateral-matrices}.

\begin{figure}[t]
	\centering
		\includegraphics[width=0.55\textwidth]{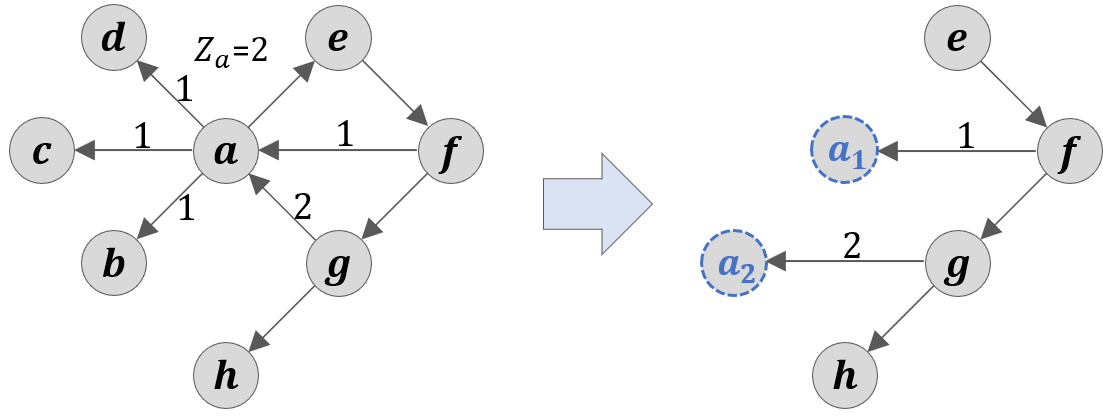}
	\caption{{\small
	An example of a step in the iterative process of Proposition \ref{thm:unique-equilibrium-existence}. Letters indicate vertex names and numbers near directed edges indicate the investment opportunity amounts. E.g., vertex $a$ has an investment opportunity of amount $2$ in the enterprise of vertex $g$. Edges where the numbers are not indicated can have any positive value in this example. Left: The network before the iteration step. Vertex $a$ is an enterprise vertex with operational cost $Z_a=2$. Vertex $a$ is also a cycle vertex in the network, and has three non-enterprise investors (spike vertices), each with an investment amount of $1$, totaling more than the cost $Z_a$, and thus (full) collaterals can ensure $a$ will not default. Right: The new network after the iteration step, capturing the ensured solvency of $a$. Vertex $a$ is removed together with it's spike vertices ($b$, $c$, $d$)  and incident edges. The investment opportunities of vertex $a$ of amounts $1$ and $2$ are replaced by investments of the same amounts from new spike vertices, indicated in the figure with dashed boundaries and denoted by $a_1$ and $a_2$.  
	}}
	\label{fig:iterative-process}
\end{figure}

\subsection{Zero and Full Collaterals}
In Section \ref{sec:design} we partitioned the collaterals to two types: full and partial collaterals. A finer partition of collateral types distinguishes between two types of partial collaterals: zero and positive collaterals. The next result, Lemma \ref{thm:zero-collaterals-1}, characterizes the cases of zero-collaterals; it defines the condition on a set $A_k$ of investors in enterprise $k$ such that another player $i$ will prefer to cooperate even with a zero collateral.  
Intuitively, the lemma requires that even in the worst case for $i$, in which all the other players that are not in $A_k$ defect, the enterprise return to $i$ when investing will be weakly higher than $x_{ki}$, and so $i$ does not need a collateral to incentivize her to prefer investing. 
The proof of this lemma, as well as other technical proofs in the rest of this paper, are in Appendix \ref{sec:appendix-additional-proofs}.

\begin{lemma}\label{thm:zero-collaterals-1}\textbf{(zero-collateral condition)}:
Denote by $K$ the set of players with investment opportunities in enterprise $k$.
Fix a set $A_k \subset K$  
and any player $i \in K \setminus A_k$. If  
\begin{equation}\label{eq:zero-collateral-condition}
\small
	x_{ki} + \sum_{\substack{j \in A_k}}x_{kj} \geq Z_k(1 + 1/\alpha_k)
\end{equation}
\normalsize
\noindent
then 
for any action profile of all players in which $i$ does not default and the players in $A_k$ invest in $k$, player $i$ has a strict best reply to cooperate w.r.t. $k$ \emph{even with zero collateral ($c_{ki} = 0$)}. 

\vspace{3pt}
\noindent
In addition, if player $i$ with $c_{ki}=0$ has 
utility at least $x_{ki}$ assuming that the players $A_k \cup \{i\}$ invest and the other players in $K$ defect, then 
Equation (\ref{eq:zero-collateral-condition}) must hold. 
\end{lemma}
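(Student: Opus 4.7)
The plan is to treat the two directions separately and reduce both to an explicit computation of the enterprise return $R_{ki}$ in the extremal case where only the players in $A_k \cup \{i\}$ invest in $k$. Monotonicity (Proposition \ref{thm:monotonicity}) will then take care of every other action profile in the forward direction.

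For the forward direction, I first consider the worst-case subset of investors for $i$, namely $I_k = A_k \cup \{i\}$. Writing $S = x_{ki} + \sum_{j \in A_k} x_{kj}$, Equation (\ref{R_i(II)}) gives
\[
R_{ki} \;=\; (1+\alpha_k)\,(S - Z_k)\,\frac{x_{ki}}{S}.
\]
A direct algebraic manipulation shows that the inequality $R_{ki} \geq x_{ki}$ is equivalent to $\alpha_k S \geq Z_k(1+\alpha_k)$, i.e.\ to the hypothesis $S \geq Z_k(1+1/\alpha_k)$. So under the hypothesis, $R_{ki} \geq x_{ki}$ in this minimal configuration. Now for any action profile in which $i$ does not default and the players in $A_k$ invest in $k$, the induced set of investors contains $A_k \cup \{i\}$, so part (1) of Proposition \ref{thm:monotonicity} yields $R_{ki} \geq x_{ki}$ in general. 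Invoking Equation (\ref{eq:cooperate-utility}) with $c_{ki}=0$ gives $U_{ki} \geq x_{ki}$, which equals the utility from defecting on that edge; the tie-breaking rule (players strictly prefer cooperating when utilities are equal, provided no default) then turns this weak preference into a strict best reply. Cooperating on edge $(k,i)$ does not affect $i$'s payoff from any other edge, so this is a strict best reply globally.

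For the converse, I assume that when $I_k = A_k \cup \{i\}$ and all other players in $K$ defect, player $i$ with $c_{ki}=0$ achieves utility at least $x_{ki}$ from $k$. Looking at Equation (\ref{eq:cooperate-utility}) with $c_{ki}=0$, the only way $U_{ki}\geq x_{ki}$ can hold is if $R_{ki} \geq x_{ki}$ in that configuration. Plugging the same formula for $R_{ki}$ as above and reversing the algebraic manipulation gives $x_{ki} + \sum_{j\in A_k} x_{kj} \geq Z_k(1+1/\alpha_k)$, as required.

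There is not really a hard step here — both directions turn on the same algebraic identity. The only point that requires care is to check that the hypothesis in the forward direction ("$i$ does not default and the players in $A_k$ invest") indeed forces $I_k \supseteq A_k \cup \{i\}$ so that monotonicity applies, and to correctly handle the boundary case $R_{ki}=x_{ki}$ via the tie-breaking rule rather than via the inequality alone.
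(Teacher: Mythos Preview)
Your proof is correct and follows essentially the same approach as the paper: both directions reduce to the explicit computation of $R_{ki}$ in the minimal configuration $I_k = A_k \cup \{i\}$, with Proposition~\ref{thm:monotonicity} handling all larger profiles in the forward direction. If anything, you are slightly more careful than the paper in making explicit the role of the tie-breaking rule at the boundary $R_{ki}=x_{ki}$ and in noting that the hypothesis forces $I_k \supseteq A_k \cup \{i\}$.
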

The next lemma shows a condition on a set $A_k$ of investors in enterprise $k$ such that a player outside the set which is not in default 
prefers to cooperate w.r.t. $k$, 
only if given full collateral. 
Intuitively, in the worst case for this player, where all the other players defect, the enterprise return to the player if she invests is zero, and so a full collateral is needed to incentivize investment. 

\begin{lemma}\label{full-collaterals}\textbf{(full-collateral condition)}:
Denote by $K$ the set of players with investment opportunities in enterprise $k$.
Fix a set $A_k \subset K$  
and any player $i \in K \setminus A_k$. If  
\begin{equation}\label{eq:full-collateral-condition}
\small
	x_{ki} + \sum_{\substack{j \in A_k}}x_{kj} \leq Z_k
\end{equation}
\normalsize
\noindent
then, assuming that the players $A_k$ invest and the other players in $K$ 
defect, player $i$ prefers to cooperate w.r.t. $k$ \emph{only if she has a full collateral ($c_{ki} = x_{ki}$)} and she is not in default.  

\vspace{3pt}
\noindent
In addition, if for every partial collateral $c_{ki} < x_{ki}$ player $i$ has 
utility less than $x_{ki}$ assuming that the players $A_k \cup \{i\}$ invest and the other players in $K$ defect, then 
Equation (\ref{eq:full-collateral-condition}) must hold. 
\end{lemma}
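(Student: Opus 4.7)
The plan is to handle both halves of the lemma by first computing the enterprise return $R_{ki}$ under the stipulated strategy profile and then unpacking the utility formula (\ref{eq:cooperate-utility}).

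For the sufficiency direction, assume Equation (\ref{eq:full-collateral-condition}) holds. I would first observe that if $i$'s own firm is in default then cooperating on $(k,i)$ places her in $C_k \setminus I_k$ with utility $0$, which is strictly worse than the defect payoff $x_{ki}$; hence if she prefers to cooperate she must be solvent. Assuming $i$ is solvent, her cooperation places her in $I_k$ and the total invested capital in $k$ is $x_{ki} + \sum_{j\in A_k} x_{kj} \le Z_k$. Plugging into (\ref{R_i(II)}) gives $R_{ki}=0$: either the sum is strictly less than $Z_k$ so the enterprise defaults and the $\max(0,\cdot)$ clips the return to zero, or equality holds and the net-return factor $\sum_{j} x_{kj}-Z_k$ itself vanishes. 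Then by (\ref{eq:cooperate-utility}) her utility from cooperating is $\min(c_{ki}, x_{ki})$, which (using the tie-breaking rule toward investing from Section \ref{sec:design}) weakly matches the defect payoff $x_{ki}$ only when $c_{ki} = x_{ki}$.

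For the necessity direction I would argue by contrapositive. Suppose (\ref{eq:full-collateral-condition}) fails, so $x_{ki} + \sum_{j\in A_k} x_{kj} > Z_k$. When the set of investors in $k$ is $A_k \cup \{i\}$, the net-return factor in (\ref{R_i(II)}) is strictly positive, hence $R_{ki} > 0$. It then suffices to exhibit a single partial collateral $c_{ki} < x_{ki}$ for which $i$'s utility reaches $x_{ki}$: if $R_{ki} \ge x_{ki}$, any partial $c_{ki}$ yields utility $R_{ki} \ge x_{ki}$ via the third branch of (\ref{eq:cooperate-utility}); otherwise, choosing $c_{ki} = x_{ki} - R_{ki}/2$ places $R_{ki} + c_{ki} > x_{ki}$ and therefore returns utility $x_{ki}$ via the second (capped) branch of (\ref{eq:cooperate-utility}). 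Either way, the hypothesis fails, forcing Equation (\ref{eq:full-collateral-condition}) to hold.

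The main subtlety is the boundary case $x_{ki} + \sum_{j\in A_k} x_{kj} = Z_k$ in the sufficiency direction: the enterprise does not technically default (the model uses strict inequality for default), yet the net return is still zero, so the investor status and the utility formula must be applied jointly rather than relying on a $C_k\setminus I_k$ argument. Once one observes that (\ref{R_i(II)}) collapses to $0$ in both the strict and the equality subcases, the rest is a direct unpacking of (\ref{eq:cooperate-utility}) together with the tie-breaking convention.
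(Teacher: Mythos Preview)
Your proof is correct and follows essentially the same route as the paper: compute $R_{ki}=0$ under (\ref{eq:full-collateral-condition}) to force full collateral in the sufficiency direction, and in the converse use $R_{ki}>0$ to exhibit a partial collateral achieving utility $x_{ki}$. You are in fact slightly more careful than the paper, explicitly treating the default case, the boundary $\sum = Z_k$, and the subcase $R_{ki}\ge x_{ki}$ (where the paper's choice $c_{ki}=x_{ki}-R_{ki}$ would be negative).
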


Using Lemma \ref{thm:zero-collaterals-1} and Lemma \ref{full-collaterals} we show that for integer input, 
for any large enough return rates, in any optimal solution all collaterals must be either full or zero. 
This turns out useful in many later proofs. 
\begin{lemma}\label{large-alpha}\textbf{(large $\alpha$)}:
Consider the collateral minimization problem with integer input $\{n, \mathcal{X},$ $\mathcal{Z}, \mathcal{A}\}$ and let $k \in V_e$ be an enterprise vertex.   
If $\alpha_k > Z_k$ then every solution 
 $\textbf{c}$ to the collateral minimization problem satisfies that for each $i$, $c_{ki}\in \{0,x_{ki}\}$. 
Thus, if
 $\alpha_k > Z_k$ for every enterprise $k$, then 
in every optimal solution every positive collateral is a full collateral.
\end{lemma}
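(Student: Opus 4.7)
The plan is to combine minimality of an optimal solution with the dominance-solvable structure guaranteed by Proposition \ref{thm:uniqueness} in order to localize, for each collateral $c_{ki}$, the binding ``worst case'' identified by Lemmas \ref{thm:zero-collaterals-1} and \ref{full-collaterals}, and then to exploit integrality together with $\alpha_k > Z_k$ to rule out the only regime in which a partial collateral could be optimal.

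First, I will take an optimal solution $\textbf{c}$, which is necessarily minimal (Definition \ref{minimal-collaterals}), and fix an iterated-elimination order $\sigma_1,\ldots,\sigma_{|E|}$ for the induced game, guaranteed by Proposition \ref{thm:uniqueness}. For each edge $(k,i)$, at its elimination step $t$ I consider the worst-case action profile in which exactly the already-eliminated edges play cooperate and every other edge plays defect. The monotonicity statement in Proposition \ref{thm:monotonicity} ensures this is the binding profile: cooperate on $(k,i)$ must remain a strict best reply (after tie-breaking) there, so in particular $i$ itself is not in default. Running default determination (Definition \ref{default-determination}) on this profile yields the set $A_k \subseteq K\setminus\{i\}$ of players in $K$ that actually invest in $k$, and I set $S = \sum_{j \in A_k} x_{kj}$. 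A quick observation is that $S + x_{ki} \geq Z_k$ must hold, since otherwise $k$ defaults even with $i$'s cooperation and $i$'s utility from cooperating drops to zero, contradicting the eliminability of her defect strategy.

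Second, I will read off $c_{ki}$ by applying the two lemmas to $A_k$. If $S + x_{ki} \geq Z_k(1 + 1/\alpha_k)$, Lemma \ref{thm:zero-collaterals-1} tells us that $c_{ki}=0$ already makes cooperate preferred, so minimality of $\textbf{c}$ forces $c_{ki}=0$. If $S + x_{ki} \leq Z_k$, Lemma \ref{full-collaterals} tells us that only $c_{ki}=x_{ki}$ can induce cooperation in this worst case, so $c_{ki}=x_{ki}$. The remaining regime is the open interval $Z_k < S + x_{ki} < Z_k(1 + 1/\alpha_k)$; since the input is integer, $S + x_{ki} \in \mathbb{Z}$, while $\alpha_k > Z_k$ gives $Z_k(1+1/\alpha_k) = Z_k + Z_k/\alpha_k < Z_k + 1$, so this interval contains no integer and is vacuous. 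Thus $c_{ki}\in\{0,x_{ki}\}$, and applying the same argument to every enterprise $k$ yields the ``every positive collateral is full'' conclusion.

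The main obstacle I anticipate is the correct identification of $A_k$ in the presence of default cascades originating elsewhere in the network: players in $K$ whose edge into $k$ has already been locked to cooperate may nonetheless be in default under the worst-case profile (because some other enterprise of theirs still has only defect edges available to its investors), and such players must be excluded from $A_k$ before invoking Lemmas \ref{thm:zero-collaterals-1} and \ref{full-collaterals}, whose hypotheses are stated in terms of players who actually \emph{invest} in $k$. This is resolved by passing the worst-case profile through the default-determination procedure of Definition \ref{default-determination} before reading off $A_k$; the small remaining check that $i$ herself is not in default under this profile is forced by the fact that a defaulting $i$ would have utility $0 < x_{ki}$ from cooperating, which would prevent defect from being eliminated at step $t$ in the first place.
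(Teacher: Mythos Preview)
Your proof is correct and follows essentially the same approach as the paper's: fix an optimal (hence minimal) solution and an elimination order, for each edge $(k,i)$ pass the worst-case profile at its elimination step through default determination to obtain the actual investor set $A_k$, then apply Lemmas~\ref{thm:zero-collaterals-1} and~\ref{full-collaterals} together with integrality and $\alpha_k>Z_k$ to force $c_{ki}\in\{0,x_{ki}\}$.

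One side remark is wrong, though harmless: you assert that ``$S+x_{ki}\ge Z_k$ must hold, since otherwise $k$ defaults even with $i$'s cooperation and $i$'s utility from cooperating drops to zero.'' That is not the model. If $i\in I_k$ and the total investment in $k$ falls below $Z_k$, then $R_{ki}=0$ but $i$'s utility from cooperating is $c_{ki}$, not $0$ (Equation~(\ref{eq:cooperate-utility})); with a full collateral $c_{ki}=x_{ki}$ player $i$ still (weakly) prefers to cooperate. So $S+x_{ki}<Z_k$ is entirely possible---it is precisely the full-collateral case of Lemma~\ref{full-collaterals}. Since you handle $S+x_{ki}\le Z_k$ directly via that lemma and never rely on the observation, the argument goes through; just delete the remark.
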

%

\section{Single Enterprise}\label{sec:single}
In this section we discuss the case of networks with a single investment enterprise and $n$ potential investors. We call this type of network a \emph{star graph}; i.e., a network with a single center vertex connected with outgoing edges to $n$ vertices, as illustrated in Figure \ref{fig:star-graph}. 
Note that in this single-enterprise setting, for simplicity, we slightly abuse notation and use $n$ to indicate the number of potential investors (not including the single enterprise vertex that does not make an investment decision) and omit the index of the enterprise, keeping only the investor index $i$ of the edge weight $x_i$. In addition, since in the star network the matrices $\mathcal{X}, \textbf{c}$ have only one row with non-zero entries, we consider them as vectors instead of matrices. Additionally, as there is only one enterprise, we use $Z$ to denote its cost and $\alpha$ for its interest rate (instead of using $\mathcal{Z}, \mathcal{A}$).

Notice that each player here has a single incoming edge and so an order of elimination of player strategies is equivalent to an order over the players. Also, in the star graph players cannot be in default since they do not have their own enterprises. 
As the collateral vector that gives all investors full collateral is clearly a viable collateral vector (as there are no defaults), there exists a solution to the single-enterprise collateral minimization problem. We next characterize such solutions.

The next lemma shows an interesting property of the single enterprise setting:
given a set of cooperators, once a single player outside the set
prefers to cooperate with zero collateral, all others will prefer to
cooperate with zero collateral as well. This property will be useful
for our proofs in the this setting. 

\begin{lemma}\label{thm:zero-collaterals-2}\textbf{(one in for free and the rest follow)}:
Fix a set $A$ of players and assume that every player in $A$ cooperates. If there exist a player $i \notin A$ with zero collateral ($c_i = 0$) with a strict best reply to cooperate for any action profile of the other players, $A^c \setminus \{i\}$, then if the players $A \cup \left\{i\right\}$ cooperate, every other player has a strict best reply to cooperate, even with zero collateral.  
\end{lemma}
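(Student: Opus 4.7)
My plan is to apply Lemma \ref{thm:zero-collaterals-1} twice: once in its necessary direction to extract a quantitative inequality from the hypothesis about player $i$, and once in its sufficient direction to derive the conclusion for each other player $\ell \notin A \cup \{i\}$. First, I would translate the hypothesis on $i$ into the inequality of Equation~(\ref{eq:zero-collateral-condition}). In the star graph $i$ cannot default, and by the monotonicity statement of Proposition \ref{thm:monotonicity} her cooperation utility is smallest when every player in $A^c \setminus \{i\}$ defects, so the set of investors in that profile is exactly $A \cup \{i\}$. Since $i$ with $c_i = 0$ nevertheless (weakly) prefers to cooperate in this worst case, her cooperation utility there is at least $x_i$, and the ``in addition'' direction of Lemma \ref{thm:zero-collaterals-1} applied with $A_k = A$ yields
\[
x_i + \sum_{j \in A} x_j \;\geq\; Z\!\left(1 + \tfrac{1}{\alpha}\right).
\]

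Next, I would fix an arbitrary $\ell \notin A \cup \{i\}$ and verify the sufficient condition of Lemma \ref{thm:zero-collaterals-1} for $\ell$ with $A_k = A \cup \{i\}$. Using $x_\ell \geq 0$ and $i \notin A$,
\[
x_\ell + \sum_{j \in A \cup \{i\}} x_j \;=\; x_\ell + x_i + \sum_{j \in A} x_j \;\geq\; Z\!\left(1 + \tfrac{1}{\alpha}\right),
\]
which is precisely Equation~(\ref{eq:zero-collateral-condition}) for $\ell$ with respect to the set $A \cup \{i\}$. The sufficient direction of Lemma \ref{thm:zero-collaterals-1} then guarantees that $\ell$, with $c_\ell = 0$, has a strict best reply to cooperate for every action profile of the remaining players in $(A \cup \{i\})^c \setminus \{\ell\}$; since $\ell$ is a non-enterprise vertex in the star and hence cannot default, this is the desired conclusion.

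I do not expect a serious obstacle. The only subtle point is identifying the correct worst-case action profile for $i$ so as to apply the necessary direction of Lemma \ref{thm:zero-collaterals-1} with the tightest possible set of ``guaranteed'' investors; once this is pinned down via monotonicity, the remainder is a one-line arithmetic comparison, and the absence of default for non-enterprise vertices in the star removes any side conditions about $\ell$ being solvent.
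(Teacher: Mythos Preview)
Your proposal is correct and follows essentially the same approach as the paper: extract the inequality $x_i + \sum_{j\in A} x_j \ge Z(1+1/\alpha)$ from the hypothesis via the necessary direction of Lemma~\ref{thm:zero-collaterals-1}, then add $x_\ell \ge 0$ and invoke the sufficient direction for each remaining player. Your write-up is in fact more careful than the paper's, explicitly invoking monotonicity to identify the worst-case profile and noting that star-graph investors cannot default.
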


Proposition \ref{thm:uniqueness} showed that for any collateral matrix that induces a unique Nash equilibrium there is at least one possible order of iterated elimination of dominated strategies in the induced game which results in this unique equilibrium. The following result, 
Proposition \ref{thm:collateral-costs}, works in the inverse direction: for any possible order over the players in a star graph the theorem provides a recipe for constructing a minimal-sum collateral vector such that this order will be an order of iterated elimination of dominated strategies. It does so by giving each player the minimal amount collateral that induces her to cooperate assuming all players before her in the order cooperate, and all the others defect.  

\begin{figure}[t]
	\centering
		\includegraphics[width=0.3\textwidth]{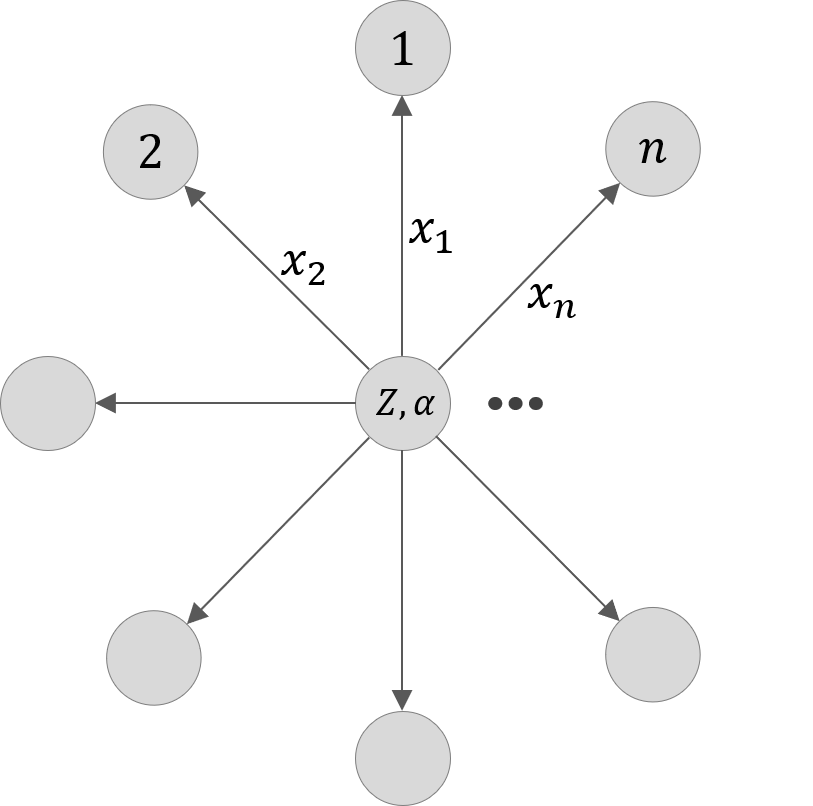}
	\caption{{\small A single-enterprise network.}}
	\label{fig:star-graph}
\end{figure}

\begin{proposition}\label{thm:collateral-costs} \textbf{(minimal collaterals)}:
Let $\Sigma$ be the set of all permutations 
of $[n]$. For every $\sigma \in \Sigma$ there exists a unique minimal collateral vector such that $(\sigma_1, \sigma_2,...,\sigma_n)$ is the order over the players of iterated elimination of strictly dominated strategies, and for every $i\in[n]$ the collateral to player $\sigma_i$ is 
{\small
\begin{equation}
	c_{\sigma_i} = x_{\sigma_i} \cdot \max \Big(0, \min \Big(1, 1-(1+\alpha)\big[1-\frac{Z}{\sum_{\substack{\sigma_j \leq \sigma_i}}x_{\sigma_j}}\big]\Big)\Big). 
\end{equation}
}
\end{proposition}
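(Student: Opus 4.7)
The plan is to verify that the vector $c$ given by the formula simultaneously (a) makes $(\sigma_1,\dots,\sigma_n)$ a valid order of iterated elimination of strictly dominated strategies in the game induced by $c$, and (b) is pointwise the smallest vector with this property. Property (a) immediately gives viability (since dominance solvability forces the surviving profile, here all-cooperate, to be the unique Nash equilibrium), while (b) yields both minimality and uniqueness among vectors inducing $\sigma$ as an elimination order.

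Fix $\sigma$ and write $S_i = \sum_{j=1}^{i} x_{\sigma_j}$ for the cumulative investment along $\sigma$. The key step is to analyze player $\sigma_i$'s preference at step $i$ of the elimination: assuming that $\sigma_1,\dots,\sigma_{i-1}$ have had their defect strategies removed, monotonicity (Proposition \ref{thm:monotonicity}) implies that the worst case for a potential cooperator at $\sigma_i$ is when $\sigma_{i+1},\dots,\sigma_n$ all still play defect, so it suffices to ensure weak preference for cooperating in that profile. The set of cooperators is then $\{\sigma_1,\dots,\sigma_i\}$ with total invested amount $S_i$, and Equation~(\ref{R_i(II)}) gives $R_{\sigma_i} = (1+\alpha)\max(S_i-Z,0)\cdot x_{\sigma_i}/S_i$, where the $\max$ captures the default regime $S_i<Z$ in which the enterprise contributes nothing. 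A case split that mirrors the clipping in the formula shows that with $c_{\sigma_i}$ as given, the cooperate-utility from Equation~(\ref{eq:cooperate-utility}) equals exactly $x_{\sigma_i}$ when either $S_i<Z$ (full collateral realized against $R_{\sigma_i}=0$) or $S_i\ge Z$ with $R_{\sigma_i}<x_{\sigma_i}$ (partial collateral $x_{\sigma_i}-R_{\sigma_i}$), and strictly exceeds $x_{\sigma_i}$ when $R_{\sigma_i}\ge x_{\sigma_i}$ (zero collateral, matching the condition of Lemma~\ref{thm:zero-collaterals-1}). The tie-breaking rule upgrades weak to strict preference, so cooperate strictly dominates defect at step $i$, and induction on $i$ proves that $\sigma$ is a valid elimination order.

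For uniqueness and minimality, I would argue that the value of $c_{\sigma_i}$ in the formula is exactly the infimum needed to make cooperate weakly preferable in the step-$i$ worst case: decreasing any single $c_{\sigma_i}$ by $\varepsilon>0$ would strictly drop the cooperate-utility in that profile (by direct inspection of Equation~(\ref{eq:cooperate-utility}) in each of the three cases), so defect would become a strict best reply for $\sigma_i$ given that $\sigma_{i+1},\dots,\sigma_n$ still play defect, and the step-$i$ elimination would fail; hence any collateral vector inducing $\sigma$ as an elimination order must weakly dominate $c$ componentwise, yielding the desired uniqueness among minimal such vectors. The step I expect to be the most delicate is the default regime $S_i<Z$: one must observe that the full collateral $c_{\sigma_i}=x_{\sigma_i}$ is in fact realized (since $R_{\sigma_i}=0\le x_{\sigma_i}-c_{\sigma_i}$) even though the enterprise is technically in default at this intermediate stage, and that the standing profitability assumption $(1+\alpha)(X-Z)\ge X$ forces $c_{\sigma_n}=0$, so the iteration terminates cleanly in the non-default regime with all-cooperate as the unique surviving profile.
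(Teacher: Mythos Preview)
Your proposal is correct and follows essentially the same approach as the paper: identify that at step $i$ the worst case for $\sigma_i$ is when exactly $\{\sigma_1,\dots,\sigma_i\}$ cooperate, compute the resulting enterprise return, and solve $c_{\sigma_i}+R_{\sigma_i}\ge x_{\sigma_i}$ with equality for the minimal value. Your write-up is in fact more explicit than the paper's proof (which is a terse derivation of the inequality chain), in that you spell out the case split over the clipping regimes, explicitly invoke monotonicity for the worst-case identification, and note how the tie-breaking rule turns the weak inequality into strict domination.
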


On the one hand, if all-cooperate is a unique equilibrium, it is reachable by iterated elimination of dominated strategies with some permutation of the players in the star graph, and on the other hand, Proposition \ref{thm:collateral-costs} shows how to find a minimal collateral vector for every permutation.  
Thus, we see that by going over all permutations we reach the entire 
set of minimal collateral vectors. This shows that the set $\mathcal{C}_{min}$ of all minimal collateral vectors, which is the search domain of the collateral minimization problem in the star graph, is a finite set. Specifically, the cardinality of the set $\mathcal{C}_{min}$ is bounded by the cardinality of the set of player permutations. I.e. for any given instance of the collateral minimization problem with $n$ players $|\mathcal{C}_{min}| \leq n!$. 
Later in this section we show that $|\mathcal{C}_{min}|$ is in fact somewhat smaller than $n!$, and bounded by $2^n$.

The following result, Theorem \ref{thm:opt-partial-collaterals}, characterizes the partial collaterals in an optimal solution of the collateral minimization problem with a single enterprise. The theorem shows that the set of players who receive full collaterals in an optimal solution provides sufficient information to easily determine a full solution. 

Another observation arising from Theorem \ref{thm:opt-partial-collaterals} is that if we follow an order of iterated elimination of dominated strategies in an optimal solution, and observe the collaterals given to the players along this order, then initially some subset of the players get full collaterals. These are the players for which \emph{cooperate} is a dominant strategy and so they may have any arbitrary order. 
After these full collaterals, 
the following players, who get partial collaterals,  
are ordered in a decreasing order of their investments. Their partial collaterals are monotone with their investments both in their absolute sizes and in percentages of the investment of each individual; i.e., for partial collaterals the collateral per dollar of investment is decreasing with the investment size (see Corollary \ref{thm:decreasing-partial-collaterals-corollary}). 

\begin{theorem}\label{thm:opt-partial-collaterals} \textbf{(optimal partial collaterals)}:
Assume without loss of generality that players are indexed so that $x_1 \geq x_2 \geq ... \geq x_n$.
Fix any optimal solution of the single-enterprise collateral minimization problem and assume that $A$ is the set of players receiving full collateral in that solution.
Denote $X_A=\sum_{i\in A}x_i$. There exists an optimal solution in which players in $A$ have full collateral and 
the collaterals for the other players $B = [n] \setminus A$ are  
{\small
\begin{equation}\label{eq:opt-partial-collaterals}
	\forall i \in B, \ c_i = \max\Big(0, x_i\Big[1 - (1+\alpha)\Big(1-\frac{Z}{X_A + \sum_{\substack{j \leq i\\j \in B}}x_j}\Big)\Big]\Big).
\end{equation}
}
\end{theorem}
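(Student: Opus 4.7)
The plan is to parameterize minimal collateral vectors by their IEDS orderings via Proposition~\ref{thm:collateral-costs}, and then run a pairwise-swap argument inside the ``partial-collateral block'' to pin down the optimal order of $B$ and hence the stated formula.

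First I would observe that any optimal solution is minimal, so by Proposition~\ref{thm:uniqueness} it is dominance-solvable with some IEDS order $\sigma$, and by Proposition~\ref{thm:collateral-costs} the collaterals along $\sigma$ are given in closed form by
\[
c_{\sigma_i} \;=\; x_{\sigma_i}\,\max\!\Big(0,\min\!\Big(1,\, 1-(1+\alpha)\Big[1-\tfrac{Z}{T_i}\Big]\Big)\Big),
\qquad T_i=\sum_{\sigma_j \le \sigma_i}x_{\sigma_j}.
\]
From this formula, $\sigma_i$ receives full collateral iff $T_i \le Z$. Consequently, if some $b \in B$ appeared in $\sigma$ before some $a \in A$, the prefix-sum at $b$'s position would be strictly smaller than $T_a \le Z$, forcing $b$ to be fully collateralized, contradicting $b \in B$. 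Hence $A$ must form a prefix of $\sigma$; and since $X_A \le Z$, any internal reordering of this $A$-prefix still delivers full collateral to every member of $A$ without changing their contribution to the total.

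The core of the proof is a swap argument on the suffix $B$. Writing a positive partial collateral as $c_\ell = -\alpha x_\ell + (1+\alpha)Z\, x_\ell / T_\ell$, for two consecutive members $i,j$ of $B$ with $x_i > x_j$ and cumulative sum $M$ just before them, a direct computation gives
\[
\bigl[c_i+c_j\bigr]_{i\text{ first}}-\bigl[c_i+c_j\bigr]_{j\text{ first}}
\;=\;\frac{(1+\alpha)Z\,x_i x_j\,(x_j-x_i)}{(M+x_i)(M+x_j)(M+x_i+x_j)}\;\le\;0.
\]
Thus placing the larger-$x$ player first weakly decreases the total collateral, and iterating these adjacent swaps reorders $B$ in decreasing $x_i$. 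Applying this to the original optimum yields another viable vector with the same sum, the same full-collateral set $A$, and $B$ sorted by decreasing $x_i$; substituting this ordering into the formula of Proposition~\ref{thm:collateral-costs} reproduces Equation~(\ref{eq:opt-partial-collaterals}) verbatim.

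The main obstacle I anticipate is the $\max(0,\cdot)$ clip in the formula: once $T_\ell \ge Z(1+\alpha)/\alpha$ the algebraic value $-\alpha x_\ell + (1+\alpha)Z\, x_\ell / T_\ell$ turns nonpositive and the realized collateral is zero, so the swap identity above only applies cleanly inside the strictly-positive block of $B$. I would handle this by splitting $B$ into its positive-partial prefix and its zero-collateral suffix, performing the swap argument only within the positive-partial block (where it matters), and noting that swaps among zero-collateral players are cost-neutral; a boundary check that the $\max(0,\cdot)$ in Equation~(\ref{eq:opt-partial-collaterals}) flips to zero precisely at the transition then completes the argument.
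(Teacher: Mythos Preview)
Your approach is the same as the paper's: parameterize minimal viable vectors by their IEDS order via Proposition~\ref{thm:collateral-costs}, then use adjacent swaps on the $B$-block to show the decreasing-$x$ order is (weakly) best. Your observation that $A$ must be a prefix of $\sigma$ (because $T_b<T_a\le Z$ would force $b$ to be fully collateralized) is correct and actually makes explicit a step the paper leaves implicit, and your swap identity in the both-positive case is exactly the paper's computation.

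The one place where your outline falls short is the boundary handling. Splitting $B$ into a positive-partial prefix $P$ and a zero-collateral suffix and swapping only within each block does \emph{not} suffice: the target ordering is all of $B$ sorted by decreasing $x$, and nothing prevents some element of the zero suffix from having a larger $x$ than some element of $P$. Reaching the target therefore requires adjacent swaps that cross the $P$/zero boundary, i.e., the mixed case where the first of the pair has a strictly positive partial collateral and the second (with larger $x$) has collateral zero. Your ``boundary check that the $\max(0,\cdot)$ flips to zero at the transition'' does not address this; what is needed is a separate verification that such a cross-boundary swap still weakly lowers the sum. The paper does exactly this as an additional case: with $s<l$, $c_{i_s}>0$ and $c_{i_l}=0$ in the current order, it computes $\mathrm{cost}(s,l)-\mathrm{cost}(l,s)$ (using $c_l=l\cdot\max(0,\ldots)$ after the swap) and shows it is strictly positive. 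Adding that case to your argument closes the gap.
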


The proof is in Appendix \ref{sec:appendix-additional-proofs}. The proof idea is that Equation (\ref{eq:opt-partial-collaterals}) is obtained from Proposition \ref{thm:collateral-costs} under the condition that in the order $\sigma$ of iterated elimination of dominated strategies, players in $A$ appear first (in an arbitrary order) and are followed by the players in $B$, ordered in an increasing order of their indices. The proof then shows that whenever this condition does not hold we can always reduce the sum of all collaterals, in a weak sense, by ordering any pair of consecutive players in the order of iterated elimination who have partial collaterals such that the player with the smallest index (and so a larger or equal investment amount) comes first.

An easy corollary of the theorem is that partial collaterals are monotone in investments, both in their absolute sizes and in percentages of the investment of each individual.
\begin{corollary}\label{thm:decreasing-partial-collaterals-corollary}
For two players $a,b$ that have partial collaterals in an optimal solution of the single-enterprise collateral minimization problem, if $x_a > x_b$ then $c_a > c_b$ and $c_a/x_a > c_b/x_b$. 
\end{corollary}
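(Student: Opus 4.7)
The plan is to read both claims directly off the explicit formula (\ref{eq:opt-partial-collaterals}) in Theorem \ref{thm:opt-partial-collaterals}, exploiting its monotone structure. Under the indexing convention $x_1 \geq x_2 \geq \cdots \geq x_n$, the hypothesis $x_a > x_b$ forces the global index of $a$ to be strictly smaller than that of $b$, and hence $a$ precedes $b$ also within $B = [n]\setminus A$.

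Next, I would introduce the cumulative quantity $S_i := X_A + \sum_{j \leq i,\, j \in B} x_j$ that appears in the denominator of (\ref{eq:opt-partial-collaterals}). Since $b$ lies strictly later than $a$ in the ordering of $B$, one has $S_b \geq S_a + x_b > S_a$. Rewriting (\ref{eq:opt-partial-collaterals}) as $c_i/x_i = \max\bigl(0,\, f(S_i)\bigr)$ with $f(S) := 1 - (1+\alpha)(1 - Z/S) = (1+\alpha)\,Z/S - \alpha$, the function $f$ is strictly decreasing in $S$ on $(0,\infty)$ whenever $Z > 0$ (the case $Z = 0$ trivializes the problem, as every partial collateral is zero). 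Therefore $f(S_a) > f(S_b)$, which yields $c_a/x_a \geq c_b/x_b$, with strict inequality whenever $c_a > 0$ (equivalently, $f(S_a) > 0$). This is the per-dollar monotonicity claim.

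For the absolute inequality $c_a > c_b$, I would simply multiply: combining $x_a > x_b > 0$ with the ratio bound $c_a/x_a \geq c_b/x_b$, either $c_b = 0 < c_a$ outright, or both ratios are positive and strict, in which case $c_a = x_a \cdot (c_a/x_a) > x_b \cdot (c_b/x_b) = c_b$. The only subtlety is the degenerate case $c_a = c_b = 0$, where the outer $\max$ in (\ref{eq:opt-partial-collaterals}) saturates for both players; this situation is consistent with Lemma \ref{thm:zero-collaterals-2} (once the zero-collateral threshold is crossed, every later player inherits it), so the corollary's strict inequalities are meaningful precisely in the regime $c_a > 0$. The main ``obstacle'' is therefore just the routine bookkeeping of verifying that monotonicity survives the outer truncation and that the global ordering is compatible with the restricted ordering on $B$; no new ideas beyond Theorem \ref{thm:opt-partial-collaterals} are needed.
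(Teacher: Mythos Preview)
Your proposal is correct and follows essentially the same approach as the paper: both arguments read the per-dollar and absolute monotonicity directly off the explicit formula (\ref{eq:opt-partial-collaterals}) via the strict monotonicity of $S \mapsto (1+\alpha)Z/S - \alpha$ together with $x_a > x_b$. You are in fact more careful than the paper in flagging the degenerate case $c_a = c_b = 0$ (where the outer $\max$ truncates both ratios), which the paper's own proof silently ignores.
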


Theorem \ref{thm:opt-partial-collaterals} shows that the collateral minimization problem can be reduced to the problem of selecting a subset of players for which to give full collaterals: Because there exists an  optimal solution in which some set $A$ of players have full collaterals and the other collaterals are given by Equation (\ref{eq:opt-partial-collaterals}), one needs only to search for the set $A$.
This is formalized in the following corollary. 
\begin{corollary}\label{thm:star-solution-procedure}
A solution to the single-enterprise collateral minimization problem with $d$ players can be found in $\mathcal{O}(2^d d)$ time by the following procedure: 
For every subset $S$ of the players construct an ordering $\sigma(S)$ in which the players in $S$ are the prefix (in an arbitrary order) and the other players after the prefix, $S^c$, are ordered in 
a non-increasing order of investment amounts; 
then, set the collaterals for the ordering $\sigma(S)$ with full collaterals to $S$ and the collaterals to $S^c$ according to 
Theorem \ref{thm:opt-partial-collaterals}, resulting in total collateral of $c(S)$.
Finally, select a solution obtained for a set $S$ with minimum total $c(S)$. 
\end{corollary}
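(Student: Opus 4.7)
The plan is to derive the corollary as a direct consequence of Theorem~\ref{thm:opt-partial-collaterals}. That theorem guarantees that some optimal solution of the single-enterprise collateral minimization problem takes the following form: a set $A\subseteq [n]$ of players receives full collateral, while the complement $B=[n]\setminus A$ (ordered in a non-increasing order of investment sizes) receives collaterals given by Equation~(\ref{eq:opt-partial-collaterals}). Enumerating all $2^d$ subsets $S$, constructing the ordering $\sigma(S)$ prescribed by the procedure, and computing the associated $c(S)$, therefore visits at least one collateral vector of the form in Theorem~\ref{thm:opt-partial-collaterals}. It then suffices to show that (i) every candidate $c(S)$ is itself a viable collateral vector, and (ii) each $c(S)$ can be computed in $\mathcal{O}(d)$ time after a one-time preprocessing step.

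For correctness, I first argue that each $c(S)$ is viable. Players in $S$ receive full collateral, so cooperating is strictly dominant for them regardless of the actions of the other players. Given that, Proposition~\ref{thm:collateral-costs} applied to the ordering $\sigma(S)$ (with the $S$-players placed first in an arbitrary order) guarantees that the formula used for the players in $S^c$ is exactly the minimal collateral making $\sigma(S)$ an order of iterated elimination of strictly dominated strategies ending in all-cooperate. Hence $c(S)$ induces all-cooperate as the unique Nash equilibrium and is therefore viable. Since the enumeration includes the optimal $A$ supplied by Theorem~\ref{thm:opt-partial-collaterals}, taking $S$ that minimizes $c(S)$ must yield an optimal solution.

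For the runtime, I first sort the players in a non-increasing order of $x_i$ in $\mathcal{O}(d\log d)$ time; this order is the common suffix-ordering used for every subset, so it is computed only once. The enumeration then visits all $2^d$ subsets $S$. For each $S$, I compute $X_S=\sum_{i\in S}x_i$ and scan the sorted list once, skipping players in $S$ and maintaining a running partial sum $X_S+\sum_{j\leq i,\, j\in S^c}x_j$ that appears in the denominator of Equation~(\ref{eq:opt-partial-collaterals}); each individual collateral (including the $\max(0,\cdot)$ clipping) is then evaluated in $\mathcal{O}(1)$ time. This yields $\mathcal{O}(d)$ work per subset and $\mathcal{O}(2^d \cdot d)$ overall, absorbing the initial sort. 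Tracking the best $c(S)$ seen so far costs $\mathcal{O}(1)$ per subset, and the recorded subset returns an optimal collateral vector.

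Given the characterization in Theorem~\ref{thm:opt-partial-collaterals}, there is no substantial mathematical obstacle; the main care needed is bookkeeping, to confirm that the per-subset computation genuinely runs in $\mathcal{O}(d)$. This is handled by the presort together with the observation that Equation~(\ref{eq:opt-partial-collaterals}) depends only on $X_S$ and on a prefix sum along the sorted $S^c$, both of which are maintained incrementally during a single scan.
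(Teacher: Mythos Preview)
Your proposal is correct and matches the paper's intent: the paper states the corollary as an immediate consequence of Theorem~\ref{thm:opt-partial-collaterals} and gives no separate proof, so your argument (viability of each $c(S)$ via Proposition~\ref{thm:collateral-costs}, optimality because the enumeration includes the optimal full-collateral set $A$, and a linear scan per subset after a one-time sort) is exactly the intended derivation, only spelled out more carefully. One minor remark: for an arbitrary $S$ the formula in Theorem~\ref{thm:opt-partial-collaterals} may return a value exceeding $x_i$ (it lacks the $\min(1,\cdot)$ cap of Proposition~\ref{thm:collateral-costs}), so for such $S$ the resulting vector is not literally the minimal one for the ordering $\sigma(S)$; this is harmless, since any such $c(S)$ is still the total of a viable vector and hence cannot undercut the true optimum, while at $S=A$ the two formulas coincide.
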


\subsection{Computational Complexity}\label{subsec:complexity}
In the previous section we have seen that the single-enterprise collateral minimization problem can be solved as a subset-selection problem for the set of players that receive a full collateral for their investment. 
In this section we present a formal connection between the single-enterprise collateral minimization problem and the knapsack problem, proving that the single-enterprise collateral minimization problem is NP-hard for integer inputs. 
The outline of the proof is as follows. 
First we consider the regime of a large interest parameter $\alpha$ in which, by Lemma \ref{large-alpha}, all the collaterals are either full or zero. Then we show that in such instances the largest investor (or, in case there are repetitions of the same largest values, one of those investors) must receive a zero collateral in an optimal solution (Lemma \ref{largest-player}). We then define an inverse version of the uniform knapsack problem in which we seek a set of items of one-dimensional sizes that together exceed a given capacity threshold, but do so in the minimum way (Definition \ref{inverse-knapsack}). We show that this ``inverse knapsack problem'' is equivalent to the standard uniform knapsack problem. Then we show a reduction from the inverse knapsack problem to our problem.
\begin{lemma}\label{largest-player}\textbf{(largest player)}:
Consider the single-enterprise collateral minimization problem with integer  input $\left\{n, \left\{x_i\right\}_{i=1}^n, Z, \alpha \right\}$. If $\alpha > Z$
 then in every optimal solution at least one of the players with the largest investment opportunity receives a zero collateral. 
\end{lemma}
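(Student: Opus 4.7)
The plan is to argue by contradiction via a local swap. Suppose an optimal collateral vector $\mathbf{c}^*$ gives a \emph{positive} collateral to every player whose investment equals $x^{*} := x_1$; I will exhibit a feasible vector of strictly smaller total cost, contradicting optimality.

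First, since $\alpha > Z$ and the input is integer, Lemma~\ref{large-alpha} forces every entry of $\mathbf{c}^*$ into $\{0, x_i\}$. Let $A \subseteq [n]$ be the set of full-collateral players, $B = [n] \setminus A$ the zero-collateral players, and $X_A = \sum_{j\in A} x_j$. Under the tie-breaking toward investing, cooperate is strictly dominant for every $i \in A$, so by Proposition~\ref{thm:uniqueness} and Proposition~\ref{thm:monotonicity} the uniqueness requirement of $\mathbf{c}^*$ reduces, via Lemma~\ref{thm:zero-collaterals-2}, to the existence of \emph{some} player in $B$ who strictly prefers to cooperate with zero collateral even when all of $A$ cooperate and the rest of $B$ defect. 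Applied to the player $p \in B$ with largest $x_p =: y$, Lemma~\ref{thm:zero-collaterals-1} turns this into the single inequality
\[
y + X_A \;\geq\; Z\bigl(1 + 1/\alpha\bigr),
\]
which $\mathbf{c}^*$ must satisfy (trivially so if $B = \emptyset$).

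Next I perform the swap. Assume toward contradiction that $1 \in A$ and that no player with investment $x_1$ lies in $B$. If $B = \emptyset$, then $A = [n]$ and the standing profitability assumption $(1+\alpha)(X - Z) \geq X$ rearranges to $X \geq Z(1+1/\alpha)$, so moving player $1$ to $B$ remains feasible by Lemma~\ref{thm:zero-collaterals-1} and strictly reduces total cost by $x_1 > 0$. If $B \neq \emptyset$, pick $p \in B$ with $x_p = y$; by assumption $y < x_1$. Define $A' = (A \setminus \{1\}) \cup \{p\}$ and $B' = (B \setminus \{p\}) \cup \{1\}$. Since $x_1$ is globally maximal, $\max_{i \in B'} x_i = x_1$, so the new feasibility condition is
\[
x_1 + X_{A'} \;=\; x_1 + (X_A - x_1 + y) \;=\; X_A + y \;\geq\; Z(1+1/\alpha),
\]
i.e.\ exactly the original condition, and therefore holds. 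The new total collateral is $X_{A'} = X_A - x_1 + y < X_A$, contradicting the optimality of $\mathbf{c}^*$.

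The step I expect to be the main obstacle is verifying that the swapped configuration genuinely satisfies the \emph{unique}-equilibrium requirement, not merely that all-cooperate is \emph{an} equilibrium. The key device is Lemma~\ref{thm:zero-collaterals-2}: once the single largest player in $B'$ (here player $1$) has a strict incentive to cooperate under the worst case for the remaining $B'$-players, the rest unravel by iterated elimination of strictly dominated strategies, using the monotonicity of Proposition~\ref{thm:monotonicity}, so Proposition~\ref{thm:uniqueness} delivers uniqueness. The secondary delicate point is the $B = \emptyset$ branch, which I handle by observing that the profitability bound translates \emph{exactly} into the zero-collateral threshold $Z(1+1/\alpha)$, so at least one player can always be released to $B$.
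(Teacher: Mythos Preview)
Your proof is correct and follows essentially the same swap argument as the paper: use Lemma~\ref{large-alpha} to reduce to $\{0,x_i\}$ collaterals, then exchange the full collateral of a maximum-investment player with the zero collateral of a smaller player in $B$, checking feasibility via Lemma~\ref{thm:zero-collaterals-1} and Lemma~\ref{thm:zero-collaterals-2}. If anything, your version is slightly more careful than the paper's: you explicitly handle the edge case $B=\emptyset$ using the profitability assumption, and you specify that the swap partner $p$ should be the \emph{largest} player in $B$ (whereas the paper's ``consider any player $i_s\notin S^*$'' is a bit loose, since the zero-collateral inequality $s+X_{S^*}\ge Z(1+1/\alpha)$ need not hold for an arbitrary such $i_s$).
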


Next, we define the \emph{inverse knapsack problem} and observe that it is NP-hard, and then 
reduce it to the collateral minimization problem for star graphs with integer input.
\begin{definition}\label{inverse-knapsack}\textbf{(inverse knapsack)}:  
	For an input of integers $\left\{x_i\right\}_{i=1}^n$ and an integer threshold $t$, the \emph{inverse knapsack problem} asks to find 
	{\small
		$S^* \in \underset{{\tiny S \subseteq [n]}}{\argmin}  \sum_{i \in S}x_i, \ s.t. \
		\sum_{i \in S}x_i > t$.
	}
\end{definition}   
\begin{lemma}\label{knapsak-iff-inverse-knapsack-new}	
	The inverse knapsack problem is NP-hard.
\end{lemma}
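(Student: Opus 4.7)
The plan is to reduce the (uniform) knapsack problem---equivalently, the subset-sum optimization problem in which item values equal item sizes---to the inverse knapsack problem via set complementation. Recall that knapsack asks, given integers $\{x_i\}_{i=1}^n$ and an integer capacity $c$, for a set $S \subseteq [n]$ that maximizes $\sum_{i\in S} x_i$ subject to $\sum_{i \in S} x_i \leq c$. This problem is NP-hard, so exhibiting a polynomial-time reduction from it to inverse knapsack will establish the lemma.

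The construction I would use is the following. Given a knapsack instance $\bigl(\{x_i\}_{i=1}^n, c\bigr)$, let $T = \sum_{i=1}^n x_i$ and produce the inverse knapsack instance with the \emph{same} items $\{x_i\}_{i=1}^n$ and threshold $t = T - c - 1$. Because the inputs are integral, the strict inequality $\sum_{i \in S'} x_i > t$ in the inverse knapsack formulation is equivalent to $\sum_{i \in S'} x_i \geq T - c$. The construction is clearly polynomial-time in the size of the knapsack instance.

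For correctness, I would use the bijection $S \mapsto S' := [n]\setminus S$ between subsets and their complements, combined with the identity $\sum_{i \in S} x_i + \sum_{i \in S'} x_i = T$. This identity implies two things simultaneously: the knapsack feasibility constraint $\sum_{i \in S} x_i \leq c$ is equivalent to the inverse knapsack feasibility constraint $\sum_{i \in S'} x_i \geq T - c$, and maximizing $\sum_{i \in S} x_i$ over feasible $S$ is equivalent to minimizing $\sum_{i \in S'} x_i$ over feasible $S'$. Consequently, $S^\ast$ is an optimal knapsack solution if and only if its complement is an optimal inverse knapsack solution, so any polynomial-time algorithm for inverse knapsack yields one for knapsack. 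This gives NP-hardness.

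There is no real obstacle here beyond spotting the complementation trick; the bulk of the work is writing the equivalence cleanly and verifying that the integrality of the input lets us pass between the strict inequality in Definition~\ref{inverse-knapsack} and the weak inequality coming from knapsack. One minor bookkeeping point to be careful about is the degenerate case $c \geq T$ (trivially solved by $S = [n]$) and the case $c < 0$ (infeasible), which can be handled directly without invoking the reduction.
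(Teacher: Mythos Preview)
Your proof is correct and is essentially identical to the paper's: the paper also reduces uniform knapsack to inverse knapsack by taking complements and setting the inverse-knapsack threshold to $t = X - (r+1)$ where $X$ is the total and $r$ the knapsack capacity, which is exactly your $t = T - c - 1$. Your write-up is in fact more detailed than the paper's one-line argument (you spell out the integrality step and the degenerate cases), but the underlying idea is the same complementation trick.
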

\begin{theorem}\textbf{(hardness)}:
The single-enterprise collateral minimization problem with integer input $\left\{n, \left\{x_i\right\}_{i=1}^n, Z, \alpha \right\}$ is NP-hard.
\end{theorem}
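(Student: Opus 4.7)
The plan is to reduce from the inverse knapsack problem (NP-hard by Lemma~\ref{knapsak-iff-inverse-knapsack-new}). Given an instance with positive integer items $y_1, \ldots, y_n$ and integer threshold $t$ (assuming $\sum_i y_i > t$, otherwise there is no feasible solution), I would construct a single-enterprise collateral instance with $n+1$ investors: an \emph{anchor} player $0$ with investment $M := 1 + \max_i y_i$, and players $1,\ldots,n$ with investments $y_1, \ldots, y_n$. Set the enterprise cost to $Z := t + M$ and choose an integer interest rate $\alpha$ large enough to satisfy both $\alpha > Z$ (activating Lemma~\ref{large-alpha}, so every optimal collateral is full or zero) and the profitability inequality $(1+\alpha)(M + \sum_i y_i - Z) \geq M + \sum_i y_i$. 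Both can be met with $\alpha$ of size polynomial in the input, so the reduction is polynomial.

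By construction, the anchor is the unique largest investor, so Lemma~\ref{largest-player} forces $c_0 = 0$ in every optimal solution. By Lemma~\ref{large-alpha}, the remaining players $\{1,\ldots,n\}$ partition into a set $A$ receiving full collateral and a complementary set receiving zero collateral, with total cost exactly $\sum_{i \in A} y_i$. Viability requires that the anchor, given $A$ cooperates, has a strict best reply to cooperate with zero collateral even if every player outside $A \cup \{0\}$ defects. By Lemma~\ref{thm:zero-collaterals-1}, this is equivalent to
\[
M + \sum_{i \in A} y_i \,\geq\, Z \cdot \bigl(1 + 1/\alpha\bigr).
\]
Conversely, once the anchor is seen to cooperate with zero collateral, Lemma~\ref{thm:zero-collaterals-2} (one in for free) guarantees that every remaining zero-collateral player also strictly prefers to cooperate, so iterated elimination yields all-cooperate as the unique equilibrium and any $A$ meeting the displayed condition is viable.

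The final link is to show that the displayed condition is equivalent to $\sum_{i \in A} y_i > t$, matching inverse knapsack. Since all inputs are integers and $\alpha > Z$ pins $Z(1+1/\alpha)$ into the open interval $(Z, Z+1)$ that contains no integers, the weak inequality on the integer left-hand side becomes $M + \sum_{i \in A} y_i \geq Z+1$, i.e., $\sum_{i \in A} y_i > t$. Combining both directions, minimizing $\sum_{i \in A} y_i$ subject to viability in the constructed instance is identical to optimal inverse knapsack on $\{y_i\}$ with threshold $t$, so NP-hardness follows.

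The main obstacle, and the step I would double-check most carefully, is this alignment of the \emph{strict} inverse-knapsack inequality with the \emph{weak} zero-collateral inequality. The alignment succeeds only because the choice $\alpha > Z$ forces $Z(1+1/\alpha)$ into the integer-excluding interval $(Z, Z+1)$; any other parameter regime would require a fractional instance or a more delicate padding of $Z$ and $M$ to preserve the reduction. A secondary check is that the chosen $\alpha$ simultaneously handles profitability and the large-$\alpha$ lemma without blowing up the encoding size, which is straightforward since both thresholds are polynomial in the inverse-knapsack input.
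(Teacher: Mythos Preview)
Your proposal is correct and follows essentially the same reduction as the paper: add a single anchor investor of size $M=\max_i y_i+1$, set $Z=t+M$, pick $\alpha$ large (the paper uses $\alpha=2Z$) so that Lemma~\ref{large-alpha} and Lemma~\ref{largest-player} force the anchor's collateral to zero, and then use Lemmas~\ref{thm:zero-collaterals-1} and~\ref{thm:zero-collaterals-2} to identify the viable full-collateral sets $A$ with the feasible sets of inverse knapsack. Your write-up is in fact a bit more careful than the paper's on two points---you explicitly check profitability and spell out why $\alpha>Z$ collapses the weak zero-collateral inequality to the strict inverse-knapsack threshold via the integer gap $(Z,Z+1)$---but the argument is otherwise identical.
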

\begin{proof}
The proof is by reduction of the inverse knapsack problem to the  single-enterprise collateral minimization problem.
Let $\left\{x_i\right\}_{i=1}^n$, $t \in \mathbb{N}$ be the input to an inverse knapsack problem. We assume that the inputs are all integers and also that $t \leq \sum_i x_i - \max_i{x_i}$ (otherwise, the largest player must be inside the solution set and the problem can be mapped to a new problem without this player and with a lower threshold). 
Next we construct a special case of our problem which solves the inverse knapsack problem.

Consider an instance of the collateral minimization problem with the set $[n]$ of players with investments $\left\{x_i\right\}_{i=1}^n$ and an additional player, $n+1$, with investment $x_{max} = \max_i{x_i} + 1$, and $\alpha = 2Z$. 
By Lemma \ref{large-alpha} since $\alpha > Z$, collaterals in an optimal solution are either full or zero. 
By Lemma \ref{thm:zero-collaterals-2} we know that a single player that strictly prefers to cooperate with zero collateral assures that the game induced by the collaterals has a unique Nash equilibrium, and by Lemma \ref{largest-player}, in an optimal solution this player is the largest player. 
In an optimal solution, the set $S$ of players with full collaterals, is a set with minimum sum that satisfies the constraint 
{\small $
\sum_{j \in S} x  > Z - x_{max} 
$}.
By setting the parameter $Z$ of the collateral minimization problem to $Z = x_{max} + t$, we obtain:
{\small $
\sum_{j \in S} x_j  > t
$}. 
Because $S$ is a minimum-sum set that satisfies this condition, and since $x_{max}$ that we added to the inverse knapsack input cannot be in $S$, this is a solution for the inverse knapsack problem. Hence, the single-enterprise collateral minimization problem with integer input  is NP-hard.
\end{proof}

\section{Investment Networks}\label{sec:investment-networks}
\vspace{-3pt}
In the previous section we studied investment networks that have very simple structure: star graphs with a single enterprise and $n$ investors. Next we return to the general model of arbitrary directed investment networks. In this setting, as detailed in Section \ref{sec:model-all}, if the enterprise of an agent has defaulted (not enough capital was invested to cover the operational cost), 
the agent withdraws all her investments with zero recovery, and essentially does not make an investment in any enterprise. 
The key difference between the single-enterprise setting and the setting of a network of investments with multiple enterprises is the risk of default cascades. For example, if firm $A$ is a potential investor in the enterprise of firm $B$, and $A$ defaults, then $A$ does not invest in enterprise $B$, regardless of the incentives $A$ had for this investment. This in turn may lead firm $B$ to a default as well, in which case $B$ cannot invest in any enterprise.

In this section we present several results for the network model. 
We first show that in a \emph{directed acyclic graph (DAG)} an optimal collateral solution that is sufficient to induce a unique Nash equilibrium in each enterprise, ignoring the risk of default cascades, is actually sufficient to induce a unique Nash equilibrium in the entire network, even though defaults trigger investment failures. In fact, these locally optimal solutions that are calculated for each enterprise separately yield a globally optimal solution in the network. 
We thus conclude that for DAGs, no \emph{Network Excess Collaterals (NEC)} are needed, that is, no extra collaterals are needed due to systemic risk resulting from the potential for default cascades. As a corollary of this result, we get that if it is possible to solve the collateral minimization problem for each enterprise, we can also solve the problem for a DAG constructed from those enterprises. In particular, in any DAG in which every enterprise has at most $d$ potential investors, we can solve the problem in time  $\mathcal{O}(2^d dn)$.

We then move to consider general directed graphs (potentially with cycles) 
and present two negative results. First, we show that default cascades can no longer be ignored, 
and they might imply a huge increase in the total collateral needed to induce  a unique equilibrium (even when exists) -- the increase can be arbitrary large (NEC is unbounded), and this holds even for networks with constant number of agents. Finally, we show that cycles also imply computational hardness, even when every single enterprise problem in isolation can be solved. Specifically, we show that the problem is NP-hard on general graphs, even with access to an oracle that can solve any single-enterprise problem.   

\subsection{Preliminaries}
We first define the concept of a \emph{star decomposition} of an investment network. This concept is useful for comparing local versus network effects in the collateral minimization problem.

\begin{definition} \textbf{(star decomposition)}:
Let $G = (V,E)$ be a directed graph and denote by $V_e = \left\{v \in V: \ \exists u \in V \ s.t. \ (v,u) \in E \right\}$ the subset of vertices with non-zero out-degree in $G$. 
The \emph{star decomposition of $G$}, denoted by $\mathcal{H}(G)$,  is defined as the family of sub-graphs of $G$ such that each sub-graph includes a vertex $c \in V_e$ and all its neighboring vertices $\left\{v \in V : \ (c,v) \in E\right\}$ and the edges $\{(c,v) \in E\}$ connecting $c$ to them. 
\end{definition}
Notice that in our definition of the star decomposition, the same vertex may appear in multiple star sub-graphs if it considers multiple investments. See Figures \ref{fig:DAG-example} and \ref{fig:cycle-example} for examples of graphs and their star decompositions.
The star decomposition naturally induces a set of single-enterprise collateral minimization problems, one for each sub-graph (with the corresponding parameters).

To study the effect of the topology of the investment network on the optimal collateral scheme, we define the 
\emph{Network Excess Collaterals (NEC)} 
of a network for which a viable collateral matrix exists, to be the ratio between the sum of the optimal collaterals in the network and the sum of optimal collaterals in its star decomposition.   
Clearly, the cost of stabilizing each star sub-graph when it is a part of a network can only be weakly higher than the cost if stabilizing the same star graph when it is not connected to a network. The NEC quantifies the \emph{systemic} strategic risk component of the cost of stabilizing the network in a unique equilibrium, or the excess collaterals needed to overcome the risk of default cascades.     

For a collateral minimization problem $P=\{n, \mathcal{X}, \mathcal{Z}, \mathcal{A}\}$ with investment graph  $G = (V,E)$ we will use $P_G$ to denote the problem, to make it explicit that it has an investment graph $G$. If $H_i$ is a subgraph of $G$, we use $P_{H_i}$ to denote the collateral minimization problem induced by $P_G$ on the subgraph $H_i$. For any problem $P_K$ with investment graph $K$, we denote by $C_K$ the total collateral in any optimal solution of $P_K$ when such exists, and $C_K=\infty$ otherwise.

\begin{definition}
	Consider the collateral minimization problem $P_G=\{n, \mathcal{X}, \mathcal{Z}, \mathcal{A}\}$ with graph  $G = (V,E)$.
	Let $\mathcal{H}(G)$ be the star decomposition of $G$. 
	The Network Excess Collaterals (NEC) of $P_G$ is defined as:
	\begin{equation}\label{eq:nec}
	NEC(P_G) = \frac{{C_G}}{\sum_{H_i \in \mathcal{H}(G)} {C_{H_i}}}
	\end{equation}
\end{definition}

Note that as any $H_i\in \mathcal{H}(G)$ is a star graph, $C_{H_i}$ is always finite and hence the denominator is always finite.

\subsection{Directed Acyclic Graphs}
\begin{figure}[t]
	\centering
		\includegraphics[width=0.6\textwidth]{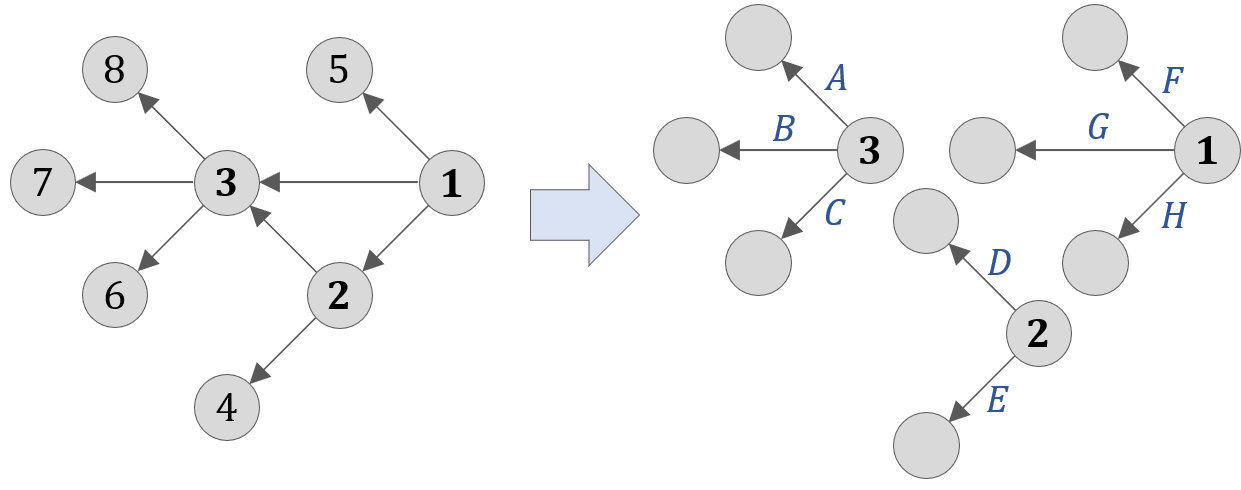}
	\caption{
	{\small
	Left: A directed acyclic graph (DAG). The numbers show a topological order. Vertices 1, 2, 3 are the enterprise vertices. Right: The star decomposition of the graph. The topological order induces an order over the star sub-graphs. In an iterated elimination of dominated strategies of the players (each decision corresponds to an edge), edges of star sub-graphs that are higher in the order are eliminated first. The capital letters show an example for such order of iterated elimination. 
The iteration starts from the edges of star graph number 3 in the topological order, which are ordered by a solution of this star graph problem. Once star graph 3 is secured in a unique equilibrium, star graph 2 can be solved as an independent problem, without risk of default from star graph 3, and the process continues down the order of the star graphs that is induced by the topological order. 
}}
\label{fig:DAG-example}
\end{figure}

We start by considering investment graphs that are acyclic. 
Our main result for this setting is that the optimal collaterals for an acyclic network of investments can be obtained by solving each enterprise in the star decomposition of the investment graph separately, ignoring the fact that it is a part of a network. 
These collaterals are clearly enough to induce a unique equilibrium when ignoring the possibility of default cascades. 
The main insight of the proof is that we can actually find an order of elimination of dominated strategies that will ensure that when an agent considers investing she is reassured that no default can harm her, even through a cascade. Essentially, in every DAG there is an enterprise vertex that has no investors which are enterprise vertices themselves. So we can argue we can put this enterprise vertex and its investors first in the order of elimination of dominated strategies (with order and collaterals according to the optimal solution for this enterprise as a stand-alone enterprise).
Any later agent knows that this first enterprise will not default, so we can then continue recursively on the rest of the network, without fear that the 
previous enterprises will trigger a default cascade. See Figure \ref{fig:DAG-example} for an illustration of the process. 

\begin{theorem} \textbf{(optimal collaterals in DAGs)}: \label{optimal-collaterals-in-DAGs}
Consider the collateral minimization problem $P_G=\{n, \mathcal{X}, \mathcal{Z}, \mathcal{A}\}$ with graph  $G = (V,E)$.
Let $\mathcal{H}(G)= \{H_i\}_{i\in V_e}$ be the star decomposition of $G$, where $V_e$ is the set of enterprise vertices. 
If $G$ is acyclic then a solution to the collateral minimization problem $P_G$ can be obtained by 
solving the collateral minimization problem induced for each of the sub-graphs in $\mathcal{H}(G)$ separately. 
Thus, for any problem $P_G$ with graph $G$ that is a DAG, it holds that $C_G=\sum_{H_i \in \mathcal{H}(G)} C_{H_i}$ and so $NEC(P_G) = 1$.
\end{theorem}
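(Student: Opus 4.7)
The plan is to prove the equality $C_G = \sum_{H_i \in \mathcal{H}(G)} C_{H_i}$ by establishing matching upper and lower bounds; equality then immediately yields $NEC(P_G)=1$. The upper bound is constructive: assemble a collateral matrix $\textbf{c}$ by giving each enterprise its star-optimal collaterals, treating each star problem $P_{H_i}$ in isolation, for a total of $\sum_{H_i} C_{H_i}$. The lower bound must rule out any global design that improves upon this decomposition.

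For the upper bound, I process the enterprise vertices in reverse topological order of $G$, say $k_m, k_{m-1}, \ldots, k_1$, inductively maintaining the invariant that every already-processed enterprise does not default. The last enterprise $k_m$ has only non-enterprise (spike) investors, since any investor of $k_m$ lies later in topological order and hence cannot itself be an enterprise; so the star-optimal iterated-elimination order for $P_{H_{k_m}}$, given by Proposition~\ref{thm:collateral-costs}, carries over verbatim to the network and forces $k_m$'s investors to cooperate. Profitability then ensures $k_m$ does not default. For a general $k_l$, every enterprise investor of $k_l$ lies strictly later in the topological order and was already processed and secured; hence, when we run the star-optimal elimination for $P_{H_{k_l}}$, enterprise-valued investors can be treated exactly as reliable spikes, and the elimination lifts to the network. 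Concatenating the per-enterprise orders yields a complete iterated-elimination order for the network game, and Proposition~\ref{thm:uniqueness} certifies that all-cooperate is the unique Nash equilibrium under $\textbf{c}$.

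For the lower bound, fix any viable $\textbf{c}$ for $P_G$ and any enterprise $k$; it suffices to show that the restriction $\{c_{ki}\}_i$ is itself viable for the single-enterprise problem $P_{H_k}$, since summing the resulting inequalities $\sum_i c_{ki} \geq C_{H_k}$ over $k$ yields $C_G \geq \sum_{H_i} C_{H_i}$. By Proposition~\ref{thm:uniqueness} the network game is dominance solvable, so pick an iterated-elimination order $\sigma$ and extract the subsequence $(k, i_1), (k, i_2), \ldots$ of $k$-edges. I claim this subsequence is a legal iterated-elimination order for $P_{H_k}$. When $(k, i_l)$ is eliminated in the network, $i_l$ strictly prefers cooperate against every remaining profile; consider in particular the ``friendly'' extension in which all non-$k$ edges cooperate (a choice always in the remaining strategy set since cooperate is never eliminated). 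A reverse-topological induction on enterprises other than $k$, using the DAG hypothesis, shows that no such enterprise defaults in this extension. Hence the set $I_k$ coincides between the network extension and the corresponding star profile, so $i_l$'s network cooperate-utility equals her star cooperate-utility, and the strict preference transfers to $P_{H_k}$, validating the $l$-th elimination there.

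The main technical obstacle lies in this lower-bound step, specifically in aligning cooperate-utilities between the network and the star game. In a general graph, a $k$-investor who counts in the star might fail to count in the network because of a default cascade originating elsewhere, which would break the equality of $I_k$; the acyclicity of $G$ is exactly what prevents such uncontrolled cascades once the non-$k$ edges are fixed to cooperate, and is therefore the crucial hypothesis enabling the transfer. Once the star-game iterated elimination is in hand, the inequality $\sum_i c_{ki} \geq C_{H_k}$ follows, and combined with the upper-bound construction, both bounds close and $NEC(P_G)=1$.
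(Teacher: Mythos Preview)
Your proof is correct and follows essentially the same approach as the paper: the upper bound is obtained by concatenating the per-enterprise elimination orders in reverse topological order, and the lower bound by observing that any viable network collateral matrix restricts to a viable collateral vector on each star (the paper states this restriction step tersely, while you spell it out). One minor imprecision: in your lower-bound step the claim that ``no enterprise other than $k$ defaults'' in the friendly extension is too strong---enterprises \emph{preceding} $k$ in topological order may default if $k$ does---but you only need that the enterprise investors of $k$ (which all lie \emph{after} $k$) do not default, and your reverse-topological induction restricted to those vertices gives exactly that, so the conclusion $I_k^{\text{network}}=I_k^{\text{star}}$ stands.
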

\begin{proof} 
	For each sub-graph $H_i \in \mathcal{H}(G)$, 	 
	compute a solution to the collateral minimization problem induced for this single enterprise graph, and let $C_{H_i}$ be the total collateral in that solution. 
	We claim that these same collaterals (with an appropriate order of elimination to be specified below) 
	form a solution to the problem $P_G$. This solution has total collateral  $C_G=\sum_{H_i \in \mathcal{H}(G)} C_{H_i}$, and we argue that no solution has lower total.
	First, if there was a solution for $P_G$ with lower total, it would induce a solution for one of the sub-graphs $H_i$ (with the same elimination order and collaterals), with lower total than $C_{H_i}$, which is a contradiction. 
	Next, we argue that there is indeed a solution for $P_G$ with total $\sum_{H_i \in \mathcal{H}(G)} C_{H_i}$. To prove the claim we present an order of elimination of dominated strategies for which these collaterals work.  
	
Let $\xi_G$ be a topological ordering of $G$, an order of the vertices in which for every directed edge $(i,j)$, vertex $i$ appears before $j$. Such an ordering exists as $G$ is a DAG. 
Consider the following order over player strategies -- enterprise vertices (vertices in $V_e$) are ordered in reversed order of the topological order; when considering an enterprise vertex $i\in V_e$ in that order, we process its investment opportunities (edges) 
in the order defined by the solution to the star graph $H_i$ of this enterprise vertex. This defines an order on decision edges for the elimination of dominated strategies for $P_G$. We argue that if we eliminate dominated strategies by that order, all players will indeed invest in all their investment opportunities. Indeed, it is easy to see by induction that when a vertex considers investing in an enterprise, the above order ensures that all other investors in the same enterprise will never default under the iterative elimination of dominated strategies. See Figure \ref{fig:DAG-example} for an example of the process. 
\end{proof}

Theorem \ref{optimal-collaterals-in-DAGs} shows that every collateral minimization problem with acyclic investment structure can be broken down into $\mathcal{O}(n)$ star-graph problems, and solving those provides a solution to the network problem. 
If the network has a bounded out-degree $d$, the procedure shown in the previous section for solving the collateral minimization problem in star graphs has $\mathcal{O}(2^d d)$ running time (Corollary \ref{thm:star-solution-procedure}). Hence we have the following corollary. 
\begin{corollary}\label{bounded-degree-DAGs-solution-time}
Consider the collateral minimization problem $P_G= \{n, \mathcal{X}, \mathcal{Z}, \mathcal{A}\}$ with graph  $G = (V,E)$.
	If $G$ is a DAG and has a bounded out-degree $d$, then the collateral minimization problem $P_G$ can be solved in $\mathcal{O}(2^d dn)$ time. In particular, for constant out-degree graphs the problem can be solved in polynomial time. 
\end{corollary}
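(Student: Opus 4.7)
The plan is to obtain this corollary by directly combining Theorem \ref{optimal-collaterals-in-DAGs} with the star-graph algorithm from Corollary \ref{thm:star-solution-procedure}, and then accounting for the preprocessing needed to glue the star solutions into a global solution on $G$.

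First, I would invoke Theorem \ref{optimal-collaterals-in-DAGs} to reduce the problem on $G$ to the collection of single-enterprise problems $\{P_{H_i}\}_{H_i \in \mathcal{H}(G)}$ on the star decomposition, observing that $|\mathcal{H}(G)| = |V_e| \leq n$. For each $H_i$, the center has out-degree at most $d$ by assumption, so the induced star problem has at most $d$ investors. By Corollary \ref{thm:star-solution-procedure}, each such problem can be solved in $\mathcal{O}(2^d d)$ time, producing both an optimal collateral vector on that star and the associated elimination order over its incoming decision edges. Summing across the at most $n$ enterprises gives $\mathcal{O}(2^d dn)$ total time for the star-by-star computations.

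Next I would check the preprocessing: constructing $\mathcal{H}(G)$ from $\mathcal{X}$ and computing a topological order $\xi_G$ of $G$ (used in the proof of Theorem \ref{optimal-collaterals-in-DAGs} to concatenate the per-star elimination orders into a single global elimination order) can both be done in $\mathcal{O}(|V| + |E|)$ time via standard algorithms. Since the out-degree is bounded by $d$, we have $|E| \leq dn$, so these steps cost $\mathcal{O}(dn)$, which is absorbed into $\mathcal{O}(2^d dn)$. The final output simply writes the per-star collaterals as the entries of the global collateral matrix $\mathbf{c}$, which again takes $\mathcal{O}(dn)$ time.

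There is no substantive obstacle here: the only thing to be a bit careful about is making sure the bookkeeping (building $\mathcal{H}(G)$, computing $\xi_G$, emitting the concatenated elimination order) is not asymptotically larger than the per-star computation itself, which it is not under the bounded out-degree assumption. The polynomial-time claim for constant $d$ is then immediate, since $2^d d$ becomes a constant and the running time reduces to $\mathcal{O}(n)$.
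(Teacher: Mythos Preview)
Your proposal is correct and follows essentially the same approach as the paper: combine Theorem~\ref{optimal-collaterals-in-DAGs} (decompose into at most $n$ star problems) with Corollary~\ref{thm:star-solution-procedure} (each star solved in $\mathcal{O}(2^d d)$ time). You add a careful accounting of the preprocessing costs (star decomposition, topological sort, concatenating elimination orders), which the paper leaves implicit, but this only makes the argument more complete rather than different in substance.
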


\subsection{General Graphs}\label{sub-sec:general-graphs}
Theorem \ref{optimal-collaterals-in-DAGs} showed that in every directed \emph{acyclic} graph $G$ it holds that for every collateral minimization problem $P_G$ with the topology $G$, $NEC(P_G)=1$, that is, no extra collaterals are needed to handle the network systemic risk. 
In sharp contrast, the following theorem shows that in graphs with directed \emph{cycles}, the $NEC$ is unbounded, 
even when only considering instances for which it is possible to induce a unique equilibrium by collaterals.  

The proof uses the construction
depicted in Figure \ref{fig:cycle-example}. 
The figure shows a cycle of investment opportunities in which every cycle vertex also has two external investors in
its own enterprise, one of which is with a fixed investment amount (of size $1$), and the other with an investment amount parametrized
by $k$. The proof shows that to establish full investment as a unique equilibrium, full collaterals must be given to both external investors
of at least one of the cycle vertices, leading to a total collateral greater than $k+1$, which can be made arbitrarily large (by increasing $k$). The basic
intuition is that these cycle vertices can never have a dominant strategy (even with full collaterals, since they may still default), and so an order of iterated elimination of dominated strategies can be established only if one of them is secured from default.

\begin{theorem} \textbf{(cycle costs)}:
For any $R>1$ there exists a solvable collateral minimization problem $P_G=\{n, \mathcal{X}, \mathcal{Z}, \mathcal{A}\}$ with cyclic investment graph $G = (V,E)$ for which $NEC(P_G)>R$. 
Moreover, this holds even for problems that all share the same investment graph $G$ with only one cycle with only $3$ enterprise vertices 
 and a constant number of vertices 
($n=|V|= 9$). 
\end{theorem}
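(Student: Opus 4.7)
The plan is to exhibit, for each $R > 1$, a concrete family of instances living on the same nine-vertex graph $G$ depicted in Figure~\ref{fig:cycle-example}: three enterprise vertices $u_1, u_2, u_3$ forming one directed cycle of length three, with each $u_j$ additionally receiving investments from two spike (non-enterprise) vertices $e_j$ of weight $1$ and $f_j$ of weight $k$. I fix the cycle-edge weights, operational costs $\mathcal{Z}$, and interest rates $\mathcal{A}$ so that (i) the iterative process of Proposition~\ref{thm:unique-equilibrium-existence} terminates on the empty graph and hence $P_G$ is solvable, and (ii) in each star sub-problem $P_{H_j}$ the cycle neighbor of $u_j$ behaves as an ordinary spike, so Theorem~\ref{thm:opt-partial-collaterals} furnishes a viable collateral whose total is bounded by some constant $M$ that does not depend on $k$. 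This gives the denominator bound $\sum_{H_j \in \mathcal{H}(G)} C_{H_j} \leq 3M$.

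The heart of the proof is to lower-bound $C_G \geq k+1$ on any viable collateral matrix for $P_G$. I invoke Proposition~\ref{thm:uniqueness}: if the matrix is viable, the induced game is dominance-solvable, so there is an order of iterated elimination of strictly dominated strategies that terminates at the all-cooperate profile. I inspect this order and show that at least $1+k$ of collateral mass is already committed before it can complete. The key observation is that a cycle vertex $u_j$'s decision on an outgoing edge cannot appear early in the order: in any remaining profile where $u_j$ itself defaults, cooperating on that outgoing edge yields zero utility from that edge while defecting preserves the invested amount, so defect is strictly preferred in that profile even if $u_j$ were promised a full collateral on the edge. Hence the first strategy eliminated must lie on an edge incoming to some cycle vertex $u_j$, and furthermore no outgoing cycle edge of $u_j$ can be eliminated before $u_j$ is certified to not default across all remaining profiles. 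In our construction neither external of $u_j$ alone, nor either external together with only the cycle neighbor, reaches $Z_j$ in total investment amount, so Lemma~\ref{full-collaterals} forces full collaterals on both externals $e_j$ and $f_j$ of the distinguished cycle vertex, contributing $1+k$ to $C_G$.

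Combining the two bounds gives $NEC(P_G) \geq (k+1)/(3M)$, which exceeds any prescribed $R > 1$ for all $k > 3MR - 1$. The same graph $G$ with exactly $n = |V| = 9$ vertices and exactly one cycle on three enterprise vertices is reused for every choice of $R$, matching the ``constant number of vertices'' clause in the theorem statement; only the weight $k$ varies. The main technical obstacle is the first-elimination-step argument certifying that breaking the cyclic dependency cannot be done cheaply through cycle-edge collaterals but must instead pay the full external-investor cost of securing an entire enterprise against default. Once this structural statement is in place, the bounded star cost $M$ is a direct application of the single-enterprise characterization and the choice of $k$ is routine.
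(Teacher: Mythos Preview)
Your proposal is correct and follows essentially the same construction and argument as the paper, which instantiates the parameters as cycle-edge weight $1$, $Z_j = k+1$, and $\alpha_j = 2k$, giving $C_{H_j} = 2$ for each star (so the denominator is exactly $6$) and $C_G = k+5$. One small caution: with these parameters the large external $f_j$ together with the cycle neighbor sums to exactly $Z_j$, not strictly less, but Lemma~\ref{full-collaterals} requires only $\leq Z_j$, and the relevant set $A_k$ in the lower-bound step cannot contain the cycle neighbor anyway (its decision has not yet been eliminated), so the full-collateral conclusion for both externals of the distinguished vertex goes through as you describe.
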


\begin{proof}
We present a cyclic network proving the claim, see Figure \ref{fig:cycle-example} for an illustration. On the left, the figure illustrates a symmetric cycle network with nine players, three of which are investors that also have investment enterprises, and the other six are investors with no enterprises of their own. The potential investment amounts are depicted on the edges of the graph, where the arrows 
are in the direction of the returns of investments. 
We will use the investments of size $k > 2$ as a parameter to be picked later, and set 
$Z_A = Z_B = Z_C = k+1$ 
and $\alpha = 2k$.

In the optimal solution of a star graph in the star decomposition of $G$, say for sub-graph $A$, the investors $B$ and $a_1$ receive full collaterals and $a_2$ cooperates with a zero collateral. This gives in the star decomposition a cost of $2$ for each enterprise, and a total cost of $6$. 
In the cycle, a viable collateral matrix exists (e.g., full collaterals to all players form a viable collateral matrix), but these collaterals of the star decomposition are not sufficient: in any order of iterated elimination of dominated strategies none of the investment opportunities of players $a_2, b_2, c_2$ can be before the other two, unless this investment is secured with a full collateral. Thus, in the cycle the total cost is $k+1$ for one of the star sub-graphs and $2$ in the other star sub-graphs, with a total of $k+5$.
Thus, the ratio of costs between the optimal solution of the cycle graph and the sum of locally optimal solutions of the star decomposition is $\frac{k+5}{6}$. So we can pick $k$ to be large enough so that $\frac{k+5}{6}>R$ to prove the claim.  
This network demonstrates that in graphs with cycles, the global optimum can be arbitrarily more expensive than the sum of local solutions, with a NEC of $\Theta(k)$.
\end{proof}

\begin{figure}[t]
	\centering
		\includegraphics[width=0.94\textwidth]{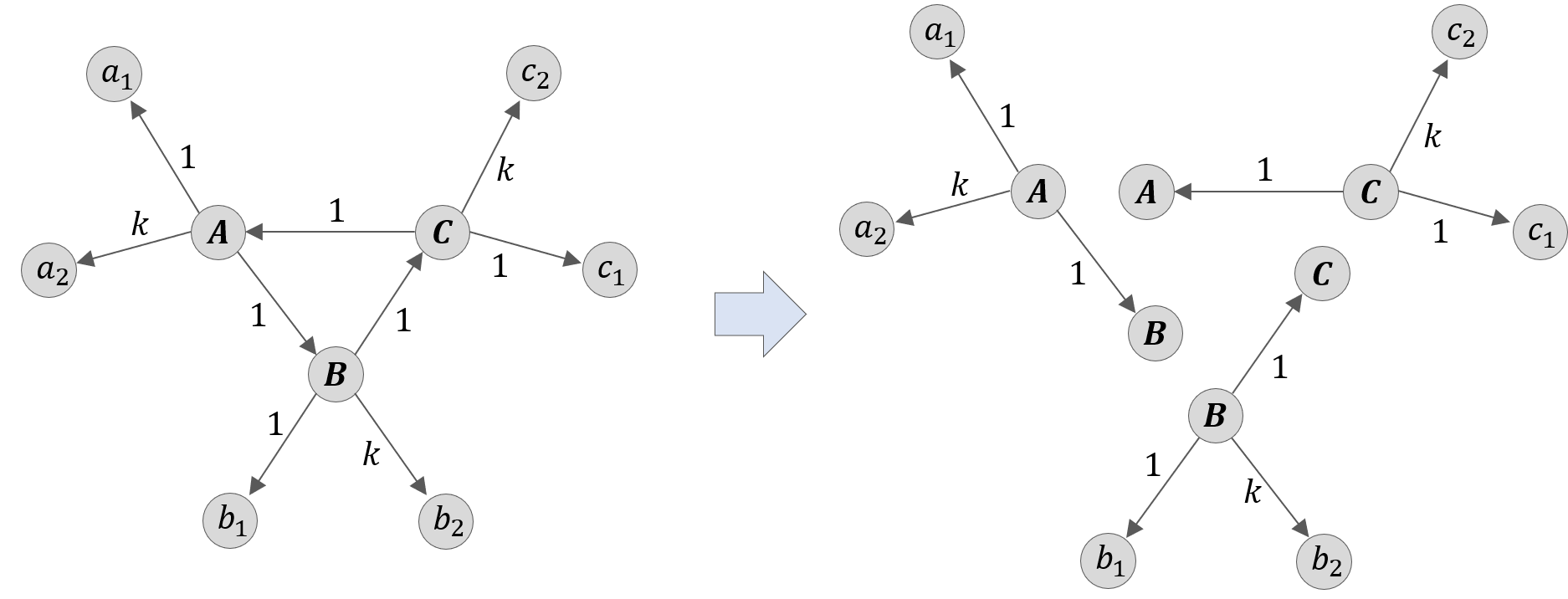}
	\caption{{\small Left: A cycle graph; Right: the star decomposition.}} 
	\label{fig:cycle-example}
\end{figure}

The following theorem shows that for a family of instances for which a solution always exists,\footnote{Cases in which a cyclic investment network has no viable collateral matrices can be identified in polynomial time. See Section \ref{sec:unique-eq-through-collaterals} and Appendix \ref{appendix:existence-of-viable-collateral-matrices}. 
}
cycles in the investment network imply computational hardness, 
even when local solutions to each single enterprise are known.   

\begin{theorem} \textbf{(hardness due to cycles)}:
The collateral minimization problem with integer input $\{n, \mathcal{X}, \mathcal{Z}, \mathcal{A}\}$ for cyclic investment networks is NP-hard. Moreover, it is so even given an oracle to solve the problem of optimal collaterals on star graphs.
\end{theorem}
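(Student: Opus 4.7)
The plan is to reduce the directed minimum feedback vertex set (FVS) problem to the collateral minimization problem on cyclic networks. Given a directed graph $H=(V_H,E_H)$ for FVS, I would build an investment network $G$ as follows: for each vertex $v\in V_H$ create an enterprise vertex $v$; for each arc $(u,v)\in E_H$ create an investment opportunity in which $u$'s enterprise can be funded by $v$'s enterprise (so cycles in $H$ become cycles in $G$); and attach to each enterprise $v$ a carefully chosen bundle of non-enterprise (spike) investors. The parameters $Z_v$, $\alpha_v$, the arc-weights $x_{ki}$, and the spike weights are to be chosen so that (i) the total investment from spikes is strictly less than $Z_v$, so that $v$ by itself cannot avoid default from its spikes alone at zero collateral, but (ii) the total investment from spikes plus \emph{any single} cyclic investor in $E_H$ comfortably exceeds $Z_v(1+1/\alpha_v)$, so that by Lemma~\ref{thm:zero-collaterals-1} such a cyclic investor suffices. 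I would pick $\alpha_v$ large enough (larger than $Z_v$) so that Lemma~\ref{large-alpha} forces every positive collateral in any optimal solution to be full.

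With this calibration, in any viable collateral scheme, whenever an enterprise $v$ participates in a cycle of $G$, the iterated elimination argument of Proposition~\ref{thm:uniqueness} requires a starting point: at least one enterprise on each cycle must be solvable without depending on any of its cyclic investors. The only way to achieve this is to give full collaterals to \emph{all} spike investors of that enterprise, at a fixed additional cost $M$ that does not depend on $H$. I would then show that the set $S\subseteq V_H$ of enterprises which receive this extra ``cycle-breaking'' full-spike collateral is exactly a feedback vertex set of $H$: once the enterprises in $S$ are secured, the remaining investment graph $G\setminus S$ is acyclic and Theorem~\ref{optimal-collaterals-in-DAGs} can be applied to finish the elimination using only the star-graph optima of the remaining enterprises. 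Conversely, any feedback vertex set $S$ of $H$ yields a viable scheme by securing exactly those enterprises and then applying the DAG solution on the rest.

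The cost accounting then reads: the total collateral equals $M\cdot|S|+\sum_{v\in V_H} C_{H_v}^{\text{star}}$, where the second term is the sum of star-graph optima (which the oracle delivers and which is the same in every scheme). Hence minimizing total collateral is equivalent to minimizing $|S|$ over feedback vertex sets of $H$, establishing NP-hardness. The reduction is polynomial, and because the spike structure of each enterprise is of bounded size and is fixed a priori, an oracle for single-enterprise (star) collateral minimization suffices to compute the second, cycle-independent, term; the oracle gives no help in choosing $S$, so the hardness persists under the oracle assumption.

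The main obstacle, as usual in feedback-vertex-type reductions on weighted mechanism design, is the gadget calibration: I need $Z_v$, the spike weights, and the $x_{ki}$ on cycle edges to jointly satisfy two opposing inequalities (one from Lemma~\ref{thm:zero-collaterals-1} guaranteeing that a single cyclic investor secures $v$, and one from Lemma~\ref{full-collaterals} guaranteeing that without any cyclic investor the spikes alone force the \emph{all} full-collateral option), while keeping everything integer-valued and keeping the per-enterprise problem small enough that the star oracle produces a uniform $M$. I would handle this by using identical gadgets on every vertex (e.g.\ uniform $Z_v=Z$, two spikes each of weight just below $Z/2$, cycle-arc weights of size roughly $Z$), and verifying the two inequalities once, after which the reduction is routine.
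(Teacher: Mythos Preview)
Your high-level plan is exactly the paper's: reduce from directed minimum feedback vertex set, attach spike gadgets to each enterprise, use large $\alpha$ to force full-or-zero collaterals via Lemma~\ref{large-alpha}, and argue that the set of enterprises that must be secured from their spikes alone is a minimum FVS, with the remaining network handled by Theorem~\ref{optimal-collaterals-in-DAGs}.

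However, your gadget calibration has a genuine error. Condition (i) requires the total spike investment at $v$ to be \emph{strictly less than} $Z_v$, and your example gadget (two spikes each just below $Z/2$) enforces exactly this. But if the input graph $H$ has every vertex on a cycle (which you may assume after stripping dead ends), the resulting investment network then satisfies the hypothesis of Lemma~\ref{thm:non-satisfiable-sub-graphs}: every vertex is a cycle vertex and every enterprise's cost exceeds the total investment from outside the sub-graph (the spikes). By that lemma, equivalently by Proposition~\ref{thm:unique-equilibrium-existence}, \emph{no} viable collateral matrix exists at all, so there is no optimal collateral to minimize and the reduction collapses. The point you are missing is that ``giving full collaterals to all spikes'' does not secure $v$ against default unless those spikes actually sum to at least $Z_v$; full collateral only guarantees the spike \emph{invests}, not that the enterprise survives.

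The correct calibration inverts (i): the spike total must be \emph{at least} $Z_v$ so every enterprise \emph{can} be secured from spikes alone, but the spikes must be asymmetric so that doing so is expensive compared to the star optimum. The paper takes one spike of weight $1$ and one of weight $k$ (where $k$ exceeds the maximum out-degree), sets $Z_v=k+1$ and all cycle-arc weights equal to $1$; then the expensive spike-only option costs $k+1$, while the star optimum gives full collateral to two weight-$1$ investors (the small spike plus one cycle neighbor) at cost $2$, after which the weight-$k$ spike joins for free via Lemma~\ref{thm:zero-collaterals-1}. Once you make this correction, your cost formula and the FVS argument go through as you describe.
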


\begin{proof}
The proof is by reduction from the feedback vertex set problem. Let $G=(V,E)$ be a directed graph.
Denote the out-degree of vertex $i\in V$ by $d_i$, and define $k = 1 + \max_{i\in V} d_i$.
Now we wish to construct an instance of the collateral minimization problem for which the solution solves also the optimization (minimization) version of the feedback vertex set problem.
For this we will construct a graph $G' = (V',E')$ by the following steps:
First, we iteratively remove from $G$ all the vertices that have a zero out-degree (and incident edges) until there are no such vertices. This process takes polynomially many steps, and results in a graph that contains only the vertices ${V_0}\subseteq V$ that are cycle-vertices in $G$, and all edges induced by ${V_0}$. Notice that the cycles in the resulting graph are exactly the same as in the graph $G$ (and so the solutions to the feedback vertex set problem on both graphs are the same). Then, we assign each edge $(i,j)$ for $i,j\in V_0$ a weight of $x_{ij} = 1$. 
Next, we add to each vertex  in $V_0$ a ``gadget'' with two neighbors -- new vertices with a zero out-degree and in-degree of one each. One vertex has an incoming edge with weight $1$, and the other  with weight $k$. 
Let $G' = (V',E')$ be the resulting graph ($V'$ is the union of $V_0$ with the additional $2|V_0|$ vertices from the gadgets).
Now we assign each vertex $i \in V_0$  
 two parameters: 
$Z_i= k + 1$ and $\alpha_i = 2k$. 
Notice that all the cycles in the graphs $G$, and $G'$ are the same cycles since we only added ``spike'' vertices that do not affect cycles, and thus the minimum feedback vertex set of $G'$ is the same as in $G$.
The weighted graph $G'$ together with the vertex properties $\{Z_i\}$, $\{\alpha_i\}$ also define an instance, $P_{G'}$, of the collateral minimization problem. In this instance of the problem, viable collateral matrices exist (e.g., full collaterals to all players are a viable collateral matrix) and optimal collaterals have the following properties:

\vspace{5pt}
\noindent
\emph{Claim:} In optimal collaterals of $P_{G'}$, in each cycle there is at least one vertex for which the sum of collaterals for its outgoing edges is $k+1$. In addition, the sum of collaterals to outgoing edges of every vertex with a non-zero out-degree is either $2$ or $k+1$. 

\vspace{5pt}
\noindent
\emph{Proof:} As $\alpha_i=2k>k+1=Z_i$ for every $i\in V_0$, by Lemma \ref{large-alpha} this instance is in the large $\alpha$ regime, implying that every collateral is either  full or zero. 
Thus, for each edge $(i,j)$ for nodes $i,j\in V'$, the collateral $c_{ij}$ equals either $x_{ij}$ or zero. 
Consider an arbitrary cycle in $G'$. 
For each vertex in the cycle and its outgoing edges (which define a star sub-graph) if there is no risk of default then the optimal collaterals are full collaterals to two players with weight 1 (as now the weight $k$ player in the star sub-graph prefers to invest, even with zero collateral, and then all others follow). 
However, if these are the collaterals in a cycle, they are not sufficient to get a unique equilibrium in iterated elimination of dominated strategies. This is so
because no player of weight $k$ can come next after these cooperating size-$1$ players in any order of iterated elimination, unless she gets a full collateral, due to a risk of default somewhere in the cycle (and the same holds true for any others without collaterals). On the other hand, once one of these size-$k$ players in the cycle has a full collateral, another full collateral must be given to a player of size $1$ in the same star sub-graph to cover the operational costs, and then no further collaterals are needed in this star sub-graph. 
Thus, in an optimal solution, in each cycle  there is a vertex for which the sum of collaterals to its outgoing edges is $k+1$.
Now assume a star sub-graph in which in the optimal solution the sum of collaterals is less than $k+1$. A necessary and sufficient condition that these collaterals induce sufficient cooperation to cover the operational cost of $Z=k+1$ is that the player of size $k$ can be convinced that at least two players of size $1$ are in the set of investors, in which case the size-$k$ player has profits from cooperating with zero collateral. Because the solution is optimal, the sum of collaterals in such star sub-graph cannot be greater than $2$.

\vspace{3pt}
Denote by $S^*$ a set of vertices 
such that for each such vertex the sum of collaterals for its outgoing edges is $k+1$.
In a solution of the collateral minimization problem each cycle contains at least one vertex from $S^*$, and so the sum of collaterals can be written as $(k+1)|S^*| + 2|(S^*)^c| $. Because the solution is optimal, we have 
{\small
\begin{equation*}
 S^* \in  \underset{{\tiny S \subseteq V'}}{\argmin} |S| \quad s.t. \ every \ cycle \ in \ G' \ contains \ a \ vertex \ in \ S^*.
\end{equation*}
}
Because $G$ has the same cycles as $G'$,  
{\small
\begin{equation*}
S^* \in  \underset{{\tiny S \subseteq V}}{\argmin} |S|  \quad s.t. \ every \ cycle \ in \ G \ contains \ a \ vertex \ in \ S^*, 
\end{equation*}
}
which is a solution of the feedback vertex set problem.
Hence, the collateral minimization problem is NP-hard even if the solutions for star-graph problems are known. 
\end{proof}

\vspace{3pt}
The above results shed light on the potential effect of cycles in investment networks and their effect on systemic risk:  
First, networks without cycles are easier to stabilize with collaterals; the existence of cycles generally increases the minimum cost of collaterals and adds computational difficulty even in bounded-degree networks. Second, in networks without cycles, the optimal collaterals that each firm optimizes locally to offer to its potential investors also yield the global optimum of collaterals that would be calculated by a central planner. In contrast, when cycles exist there are networks in which the locally-optimal collaterals do not lead to a unique equilibrium, and so central planning and interventions may be required to ensure stability.

\subsection*{Acknowledgments}
This project has received funding from the European Research Council (ERC) under the European Union's Horizon 2020 Research and Innovation Programme (grant agreement No. 740282).

\bibliography{Optimal-Collateral-bib}
\appendix
\section*{APPENDICES}
\section{Viable Collateral Matrices}\label{appendix:existence-of-viable-collateral-matrices}
\begin{lemma}\label{thm:viable-collateral-matrices-existence}\textbf{(unique equilibrium with full collaterals)}: 
Consider the input $\{n, \mathcal{X}, \mathcal{Z}, \mathcal{A}\}$ with graph $G = (V,E)$ to the collateral minimization problem.
Viable collateral matrices exist for this input (and thus, a solution to the collateral minimization problem exists) if and only if 
the all-cooperating strategy profile is reached in  
the following polynomial-time procedure: 
(i) Set the collateral for every investment opportunity to a full collateral, i.e., $\textbf{c} = \mathcal{X}$; (ii) Iteratively eliminate strictly dominated strategies. 
\end{lemma}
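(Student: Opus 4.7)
The plan is to prove the biconditional by two directions, where the ``only if'' direction---from the existence of some viable matrix to the success of the full-collateral procedure---is the substantive one. The ``if'' direction is standard: if iterated elimination of strictly dominated strategies under $\textbf{c}=\mathcal{X}$ terminates at the all-cooperate profile, then since such elimination preserves every Nash equilibrium and here collapses to a single profile, that profile is the unique Nash equilibrium; hence $\textbf{c}=\mathcal{X}$ is itself viable, so a viable collateral matrix exists.

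For the ``only if'' direction, assume some viable $\textbf{c}^*$ exists. By Proposition \ref{thm:uniqueness}, the game induced by $\textbf{c}^*$ is dominance-solvable via some order $\sigma_1,\ldots,\sigma_{|E|}$ of the edges. I will show the same order also witnesses dominance solvability of the game with $\textbf{c}=\mathcal{X}$; since the terminal surviving set under iterated elimination of strictly dominated strategies is independent of the elimination order, any order used by the procedure in the lemma will then also end at all-cooperate. The argument is by induction on $t$. At step $t$, let $\sigma_t=(k,i)$ and fix any completion $s$ of the still-undetermined strategies while $\sigma_1,\ldots,\sigma_{t-1}$ are held at cooperate. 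The crux is that default propagation (Definition \ref{default-determination}) is a function of the strategy profile and the network only, not of any collateral values: so the set of defaulting firms under $s$ is identical in the $\textbf{c}^*$-game and in the $\mathcal{X}$-game. Under $\textbf{c}^*$, strict preference of cooperate over defect on $\sigma_t$ in particular rules out $i$ being in default under $s$, for otherwise $i$'s contribution from $\sigma_t$ would be $0$ while defecting pays $x_{ki}$. Transferring this conclusion to the $\mathcal{X}$-game, $i$ is still non-defaulting under $s$; plugging $c_{ki}=x_{ki}$ into Equation \eqref{eq:cooperate-utility} gives a cooperate payoff of at least $x_{ki}$, matching the defect payoff of exactly $x_{ki}$, and the tie-breaking rule then makes cooperating strictly preferred. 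Hence defect on $\sigma_t$ is strictly dominated under $\mathcal{X}$ as well, closing the induction.

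The main obstacle is ensuring the default-propagation decoupling is genuinely clean: one must observe that Definition \ref{default-determination} never references collaterals, so that ``$i$ is not in default under any completion at step $t$'' transfers verbatim from the $\textbf{c}^*$-game to the $\mathcal{X}$-game. Once that is explicit, the inductive step reduces to a direct payoff comparison plus the tie-breaking convention. Finally, for the polynomial-time assertion, Proposition \ref{thm:monotonicity} implies that to check strict dominance of cooperate over defect on any given edge it suffices to evaluate one worst-case completion, namely the one in which every not-yet-eliminated strategy defects; this computation is polynomial, and at most $|E|$ eliminations ever occur, so the whole procedure runs in polynomial time.
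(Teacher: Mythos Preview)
Your proof is correct and follows essentially the same approach as the paper's: both directions rest on Proposition~\ref{thm:uniqueness} together with the observation that default determination is independent of collaterals, so that raising collaterals to full can only make cooperation (weakly) more attractive at each elimination step. The paper phrases the ``only if'' direction as a contradiction (if full collaterals fail but some $\hat{\textbf{c}}$ succeeds, raising collaterals would have to flip someone from cooperate to defect, which is impossible), whereas you give the equivalent direct argument by transporting the elimination order of $\textbf{c}^*$ to $\mathcal{X}$; your version is somewhat more explicit, and you additionally justify the polynomial-time claim, which the paper's proof leaves implicit.
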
 

\begin{proof}
If the iterated elimination with the full collaterals, $\textbf{c} = \mathcal{X}$, reaches the all-cooperating strategy profile, then full collaterals form a viable collateral matrix, and hence a solution to the collateral minimization problem must exist. 
If the iterated elimination does not reach the all-cooperating strategy profile, then, under full collaterals to all players, there are some players that prefer defecting over cooperating. 
Assume by contradiction that there is a viable collateral matrix $\hat{\textbf{c}} \ne \textbf{c}$. 
Collaterals that are higher than full collaterals do not have any effect on the utility of any player, and so we can assume w.l.o.g. that $\forall k,i \ \hat{\textbf{c}}_{ki} \leq x_{ki}$, and there exist some investment opportunities $x_{ki}$ such that $\hat{\textbf{c}}_{ki} < \textbf{c}_{ki}$. 
Thus, if we increase the collaterals in $\hat{\textbf{c}}$ to full collaterals there must be players that switch their decision from cooperating to defecting. 
Combined with the monotonicity property, as stated in Proposition \ref{thm:monotonicity}, this leads to a contradiction since the utility of each player from cooperating, as given by Equation (\ref{eq:cooperate-utility}), is a weakly increasing function of her collateral, while her utility from defecting is a constant. 
Notice in addition that a change in collateral for some investment opportunity $x_{ki}$ cannot affect the utility of any player from a different investment opportunity, unless this change in collateral leads player $i$ to change her decision for the investment $x_{ki}$ (the utility from any other investment is a function of the investment profile and not an explicit function of the collaterals and the investment profile is, in turn, a function of the strategy profile and not an explicit function of the collaterals). Thus we conclude that if full collaterals are not a viable collateral matrix, no viable collateral matrix exists.
\end{proof}

\begin{lemma}\textbf{(remaining edges form cycles)}:\label{thm:remaining-vertices-form-cycles}
Consider the process described in Lemma \ref{thm:viable-collateral-matrices-existence} with the input $\{n, \mathcal{X}, \mathcal{Z}, \mathcal{A}\}$ and graph $G = (V,E)$, and denote by $G'$ the sub-graph induced by the set of edges which were not eliminated in that process. 
If $G'$ is not the empty graph, then every vertex in $G'$ is a part of a directed cycle in $G'$.
\end{lemma}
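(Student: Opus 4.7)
The plan is to characterize the ``stuck'' subgraph $G'$ in terms of a safety-propagation set $S^* \subseteq V$, and then to use finiteness of $G$ to extract cycles. I would define $S^*$ as the smallest subset of $V$ containing all non-enterprise vertices $V \setminus V_e$ (spike vertices, which can never default) and closed under the rule that $v \in V_e$ joins $S^*$ as soon as $\sum_{i:(v,i)\in E,\ i \in S^*} x_{vi} \geq Z_v$. This $S^*$ coincides with the vertices removed by the iterative process in Proposition~\ref{thm:unique-equilibrium-existence}. First I would verify that under full collaterals $\textbf{c}=\mathcal{X}$ and the tie-breaking rule, only ``defect'' strategies ever get eliminated: by the profitability assumption combined with monotonicity (Proposition~\ref{thm:monotonicity}), in the best case where all still-undetermined strategies cooperate no enterprise can be forced into default, so no ``cooperate'' is ever strictly dominated; conversely, defect on edge $(k,i)$ is strictly dominated by cooperate exactly when the investor $i$ is guaranteed never to default, which holds iff $i \in S^*$. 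Hence the set of remaining edges is exactly $E' = \{(k,i) \in E : i \notin S^*\}$, and the vertex set of $G'$ consists of their endpoints.

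The core combinatorial step is to show that every vertex $v \in V(G')$ has at least one outgoing edge in $G'$. Suppose for contradiction that all of $v$'s outgoing edges (if any) are eliminated. If $v$ has no outgoing edges in $G$ at all, then $v \in V \setminus V_e \subseteq S^*$, so every incoming edge $(j,v)$ is also eliminated (its target $v$ lies in $S^*$), and hence $v$ has no edges in $G'$, contradicting $v \in V(G')$. Otherwise $v \in V_e$ and every investor of $v$ lies in $S^*$; by the profitability assumption $\sum_{i:(v,i)\in E} x_{vi} = X_v \geq Z_v$, so $v$'s $S^*$-investors already cover $Z_v$, and the inductive rule defining $S^*$ puts $v \in S^*$. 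But then every incoming edge to $v$ is also eliminated, and again $v$ has no edges in $G'$---contradiction. Therefore $v$ has an outgoing edge in $G'$, whose target necessarily lies in $V_e \setminus S^* \subseteq V(G')$, so the outgoing-edge property propagates along any such walk.

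Starting from any $v \in V(G')$ and repeatedly following an outgoing edge inside $G'$ produces an infinite sequence $v = v_0, v_1, v_2, \ldots$ of vertices in $V(G')$ with $(v_t,v_{t+1}) \in E'$; by finiteness of $V(G')$ the sequence must revisit some vertex, exposing a directed cycle in $G'$. To strengthen this to a cycle \emph{through} $v$ itself, my plan is to run the dual argument in the reverse direction: show that every $v \in V(G')$ also has an incoming edge in $G'$, iterate backwards to obtain a walk that enters $v$ within $V(G')$, and concatenate it with the forward walk out of $v$ so that finiteness yields a cycle passing through $v$. The main obstacle I anticipate is this incoming-edge direction, which is more delicate than the outgoing one: unlike the out-edge argument, which invokes $v$'s own cost condition $X_v \geq Z_v$ directly, the in-edge analogue must track which enterprises $v$ invests in and apply the $S^*$-inductive rule to them rather than to $v$ itself, and properly handling vertices whose incoming investment opportunities originate only from safe ($S^*$) enterprises will require additional care---possibly a preliminary structural reduction that trims such ``pendant'' vertices from $V(G')$ before invoking the backward walk.
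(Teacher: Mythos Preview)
Your characterization of the safe set $S^*$ and of the remaining edges $E' = \{(k,i)\in E : i\notin S^*\}$ is correct, and your outgoing-edge argument (every vertex of $G'$ has positive out-degree in $G'$) is precisely what the paper's proof establishes. The paper then asserts, without justification, that a vertex not lying on any directed cycle must be a dead-end or admit a path to a dead-end; you rightly notice that positive out-degree alone only yields a cycle \emph{reachable from} $v$, not one \emph{through} $v$.

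However, your plan to close this gap cannot succeed, for two independent reasons. First, the incoming-edge claim is false: take enterprise vertices $v',a,b$ with edges $(v',a),(a,b),(b,a)$ and positive costs small enough for profitability; then $S^*=\emptyset$, $G'=G$, and $v'$ has in-degree~$0$ in $G'$ (it makes no investments at all). No pendant-trimming rescues the statement for $v'$, because $v'$ remains a vertex of the edge-induced subgraph $G'$ and the lemma purports to cover it. Second, even granting in- and out-degree $\geq 1$ everywhere, concatenating a backward walk into $v$ with a forward walk out of $v$ need not force a repetition that straddles $v$: the first repeat can occur entirely on one side (think of two disjoint $2$-cycles joined by a single directed path through $v$), producing a cycle that misses $v$. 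In fact the three-vertex example above already shows that the lemma \emph{as stated} is false---$v'\in V(G')$ lies on no directed cycle of $G'$---and the paper's own proof has the same gap at the unjustified implication. What is true, and what both you and the paper actually prove, is that every vertex of $G'$ has positive out-degree and hence $G'$ contains a directed cycle; this weaker fact, applied to a suitable subgraph of $G'$, is what the downstream Lemma~\ref{thm:non-satisfiable-sub-graphs} really needs.
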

\begin{proof}
Assume by contradiction that there is a vertex $v'$ in $G'$ that is not a part of any directed cycle in $G'$. Then, either $v'$ is a ''dead-end vertex'' in $G'$, i.e., has zero out-degree, or there is a path in $G'$ from $v'$ to a dead-end vertex in $G'$. In either case, the strategies of the dead-end vertex can be eliminated since all its outgoing edges in the original network $G$ are invest edges, and hence she is not in default and prefers to cooperate with a full collateral in all her investments. This is a contradictions since these strategies were not eliminated in the iterated elimination process. Therefore, every remaining vertex after an iterated elimination process with full collaterals is a cycle vertex in the remaining sub-graph $G'$.     
\end{proof}

\begin{lemma}\textbf{(non-satisfiable sub-graphs)}:\label{thm:non-satisfiable-sub-graphs}
It is not possible to induce a unique equilibrium by collaterals if and only if there exists a non trivial sub-graph in which every vertex is a cycle vertex and for every enterprise in that sub-graph the cost of that enterprise is higher than the total investment by all players outside of that sub-graph investing in that enterprise.      
\end{lemma}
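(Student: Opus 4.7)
My plan is to leverage the iterative securing process of Proposition~\ref{thm:unique-equilibrium-existence}: a viable collateral matrix exists if and only if that process terminates with the empty graph. Let $U \subseteq V_e$ denote the set of enterprise vertices that are \emph{never} secured (never removed) by the process, so that $U \neq \emptyset$ if and only if no viable collateral matrix exists. Throughout I use the paper-wide profitability assumption $X_i \geq Z_i$ for every enterprise $i \in V_e$.

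\textbf{Forward direction ($\Rightarrow$).} Assume $U \neq \emptyset$. The first step is to show that every $i \in U$ has an outgoing edge in the induced sub-graph $G[U]$. Indeed, since $i$ is never secured, at termination her already-secured investors (the non-enterprise investors together with any enterprise investors secured earlier, all of which lie outside $U$) satisfy $\sum_{j \notin U,\ (i,j) \in E} x_{ij} < Z_i$; profitability gives $\sum_{j:(i,j)\in E} x_{ij} = X_i \geq Z_i$, so at least one investor of $i$ lies in $U$, i.e.\ $i$ has an out-edge in $G[U]$. The next step is to take $S$ to be any sink strongly connected component (SCC) of $G[U]$. Since $S$ is a sink SCC, every outgoing edge in $G[U]$ from a vertex of $V(S)$ stays within $V(S)$; combined with the previous observation, every vertex of $V(S)$ has at least one out-edge inside $V(S)$, which forces every vertex of $V(S)$ to lie on a directed cycle within $S$ (either through a self-loop if $|V(S)|=1$, or via strong connectivity if $|V(S)| \geq 2$). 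For the cost condition, fix $i \in V(S)$ and any investor $j$ of $i$ with $j \notin V(S)$: such a $j$ cannot lie in $U$, for otherwise $(i,j)$ would be an edge of $G[U]$ leaving the sink SCC $S$; hence $j \notin U$. Therefore the total investment into $i$ from outside $V(S)$ is bounded above by the total from outside $U$, which is $< Z_i$. Thus $S$ is the required non-trivial sub-graph.

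\textbf{Converse ($\Leftarrow$).} Suppose such a sub-graph $S$ exists and, for contradiction, assume a viable collateral matrix exists. By Proposition~\ref{thm:unique-equilibrium-existence} the iterative process terminates with the empty graph, so in particular it secures every vertex of $V(S)$. Let $i^* \in V(S)$ be the \emph{first} vertex of $V(S)$ that the process secures. At that moment, the already-secured investors of $i^*$ all lie outside $V(S)$ (no other $V(S)$-vertex has been secured yet), so the total secured investment in $i^*$ is at most $\sum_{j \notin V(S),\ (i^*,j) \in E} x_{i^*j} < Z_{i^*}$ by the cost hypothesis on $S$. Hence $i^*$ cannot in fact be secured at that step --- a contradiction.

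\textbf{Main obstacle.} The principal subtlety is in identifying the correct sub-graph in the forward direction: the most natural candidate $U$ itself need not satisfy the ``every vertex is a cycle vertex'' requirement, since a vertex of $U$ could merely point into a cyclic sub-structure of $U$ without itself lying on a cycle. Passing to a sink SCC $S$ of $G[U]$ is the right fix: the sink property is precisely what is needed both to certify the cycle condition (once profitability supplies out-degree $\geq 1$ in $G[U]$) and to upgrade the cost estimate from ``outside $U$'' to the stronger ``outside $V(S)$.''
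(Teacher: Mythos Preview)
Your argument is mathematically sound and, for the converse direction, essentially mirrors the paper's: both show that no vertex of the witnessing sub-graph can be the first one resolved (you phrase it via the securing process, the paper via iterated elimination and Proposition~\ref{thm:uniqueness}).

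There is, however, a logical-ordering issue you should fix. You invoke Proposition~\ref{thm:unique-equilibrium-existence} as a black box, but in the paper's development that proposition is proved \emph{after} the present lemma and, in fact, one direction of its proof explicitly cites Lemma~\ref{thm:non-satisfiable-sub-graphs}. As written, your forward direction uses ``no viable collaterals $\Rightarrow U\neq\emptyset$,'' which is exactly the implication the paper derives from the lemma you are trying to prove, so the citation is circular. The fix is easy: appeal instead to Lemma~\ref{thm:viable-collateral-matrices-existence} (full collaterals succeed iff any collaterals do) together with the observation that the securing process of Proposition~\ref{thm:unique-equilibrium-existence} is just iterated elimination under full collaterals tracked at the enterprise level. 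That gives you ``no viable collaterals $\Leftrightarrow U\neq\emptyset$'' without circularity.

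On the forward direction your route genuinely differs from the paper's. The paper takes the edge-induced residual graph $G'$ after full-collateral elimination and appeals to Lemma~\ref{thm:remaining-vertices-form-cycles} to assert that \emph{every} vertex of $G'$ lies on a cycle, then verifies the cost condition on $G'$. You instead work with the vertex set $U$ of never-secured enterprises, observe (via profitability) that each has out-degree $\geq 1$ in $G[U]$, and then pass to a sink SCC $S$. Your detour buys robustness: you never need the delicate claim that the entire residual graph is cycle-saturated, only that a sink SCC is, which follows immediately from strong connectivity once out-degree $\geq 1$ is in hand. The sink property is also exactly what lets you upgrade the cost bound from ``investors outside $U$'' to ``investors outside $V(S)$,'' which the paper achieves instead by a separate contradiction argument inside $G'$. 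Overall your forward argument is cleaner and more self-contained; the paper's is shorter because it offloads the structural work to Lemma~\ref{thm:remaining-vertices-form-cycles}.
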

\begin{proof}
Assume that a unique equilibrium cannot be induced by collaterals. By Lemma \ref{thm:remaining-vertices-form-cycles}, after an iterated elimination of dominated strategies with full collaterals there remains a sub-graph $G'$ in which every vertex is a part of a cycle in $G'$ and no further strategies (corresponding to the edges of $G'$) can be eliminated. 
Assume by contradiction that there is a vertex $v'$ in $G'$ for which the cost of her enterprise is lower or equal to the total investment by all players outside of the sub-graph $G'$ who consider investments in her enterprise. Due to the iterated elimination of all investments (corresponding to edges) outside of $G'$, specifically, all the outgoing edges of $v'$ to vertices outside of $G'$ are \emph{invest edges}. Thus, the cost of the enterprise of $v'$ is already covered and $v'$ is not in default. Because $v'$ is a cycle vertex in $G'$, she considers an investment opportunity in another enterprise $k'$ in $G'$. As $v'$ is not in default and has a full collateral for this investment opportunity, her strategy for this investment can be eliminated (she prefers to cooperate), in contradiction to $G'$ being the remaining sub-graph after the elimination. 
If, on the other hand, we assume that a unique equilibrium can be induced by collaterals, then
assume by contradiction that there is a non trivial sub-graph $G'$ in which every vertex is a cycle vertex in $G'$ and for every enterprise in $G'$ the cost of that enterprise is higher than the total investment by all players outside of $G'$ investing in that enterprise. Because for every vertex in $G'$ investments from outside of $G'$ cannot cover its operational costs, each vertex in $G'$ is in default if all the other vertices in $G'$ defect w.r.t. all their investment opportunities, and thus prefers defecting for such strategy profiles. Therefore, among all investment opportunities corresponding to edges in $G'$, no investment opportunity can be eliminated first. Due to Proposition \ref{thm:uniqueness}, this is a contradiction to the existence of a unique equilibrium. Thus, if such a sub-graph exists, a unique equilibrium cannot be reached.      
\end{proof}

\noindent
The above results are used for proving Proposition \ref{thm:unique-equilibrium-existence} as follows.\\

\begin{proof} \emph{(Proposition \ref{thm:unique-equilibrium-existence}):} 
We show that the iterative process described in the proposition is equivalent in the investment game to an iterated elimination of strictly dominated strategies with full collaterals, and reaching the empty graph is equivalent to reaching the all-cooperating strategy profile, and thus, by Lemma \ref{thm:viable-collateral-matrices-existence}, viable collateral matrices exist if and only if the iterative process reaches the empty graph. Assume full collaterals to all investments. Non-enterprise players (spikes in the graph) have a dominant strategy to invest in all their investment opportunities. After these strategies are eliminated, 
if we assume it is possible to induce a unique equilibrium by collaterals, then there exists an enterprise-player $k$ that one of her investments is next in an order of iterated elimination, and who already raised sufficient capital to avoid default (or otherwise she would prefer defecting). Due to full collaterals, player $k$ that is not in default prefers investing in all her investment opportunities. 
In this case, decisions that come next in the order of elimination are not affected if we remove $k$ and replace every investment opportunity that $k$ has by the same opportunity from a new spike node, since in both cases these same investments are all successfully performed (without default). The same holds for the next investments in the order of elimination, and once we reach the all-cooperating strategy profile at the end of the iterated elimination, the iterative process described in the proposition has removed all the vertices of the original graph.  
If, on the other hand, we assume that it is not possible to induce a unique equilibrium by collaterals, then, by Lemma \ref{thm:non-satisfiable-sub-graphs} there exists a non trivial sub-graph in which every vertex is a cycle vertex and for every enterprise in that sub-graph the cost of that enterprise is higher than the total investment by all players outside of that sub-graph investing in that enterprise. In this case, the iterative process described in the proposition cannot reach the empty graph since no vertex in that sub-graph can be removed in the iteration. 
\end{proof}

\section{Additional Proofs}\label{sec:appendix-additional-proofs}

\subsection{Proofs for Section \ref{sec:characterization}}

\begin{proof} \emph{(Lemma \ref{thm:zero-collaterals-1}):}
Assume {\small $x_{ki} + \sum_{\substack{j \in A_k}}x_j \geq Z_k\big(1+\frac{1}{\alpha_k}\big)$}. 
Since $c_{ki}=0$, if player $i$ invests, then $U_{ki} = R_{ki}$.
Due to monotonicity, as stated in Proposition \ref{thm:monotonicity}, it is sufficient to show that if only the players  
$A_k \cup \{i\}$ invest, $R_{ki} \geq x_{ki}$ (monotonicity shows that  $R_{ki}$ can only increase with cooperation by more players). 
I.e., we need to show that 
{\small
$R_{ki} = \max \Big(0, x_{ki} (1+\alpha_k)\big[1 - \frac{Z_k}{x_{ki} + \sum_{\substack{j \in A_k}}x_{kj}}\big]\Big) \geq x_{ki}$}. 
From our assumption it follows that
{\small
$(1+\alpha_k)\big[1 - \frac{Z_k}{x_{ki} + \sum_{\substack{j \in A_k}}x_{kj}}\big] \geq$ 
$(1+\alpha_k)\big[1 - \frac{Z_k}{Z_k\left(1+1/\alpha_k\right)}\big] = 1$}, 
and hence $R_{ki} \geq x_{ki}$. Thus, player $i$ prefers cooperating.
Next, assume that only the players $A_k \cup \{i\}$ invest and $U_{ki} \geq x_{ki}$. 
Again, as $c_{ki}=0$, if player $i$ invests $U_{ki} = R_{ki}$. The enterprise return to player $i$ thus satisfies:
{\small  
$R_{ki} = x_{ki} (1+\alpha_k)\big[1 - \frac{Z_k}{x_{ki} + \sum_{\substack{j \in A_k}}x_{kj}}\big] \geq x_{ki}$}.
As $Z_k,\alpha_k > 0$, this leads to Equation (\ref{eq:zero-collateral-condition}). 
\end{proof}
\begin{proof} \emph{(Lemma \ref{full-collaterals}):}
Assume  {\small $x_{ki} + \sum_{\substack{j \in A_k}}x_{kj} \leq Z_k$}. The return from enterprise $k$ to player $i$ if she \emph{invests} in $k$ is: 
{\small $\max \big(0, x_{ki} (1+\alpha_k)\big(1 - \frac{Z_k}{x_{ki} + \sum_{\substack{j \in A_k}}x_{kj}}\big)\big) = 0$}. 
Player $i$'s utility then equals $c_{ki}$, and so $i$ prefers to cooperate only with a full collateral.
Next, assume by contradiction {\small $x_{ki} + \sum_{\substack{j \in A_k}}x_{kj} > Z_k$}. 
The enterprise return to $i$ is:  
{\small $R_{ki} = \max \big(0, x_{ki} (1+\alpha_k)\big(1 - \frac{Z_k}{x_{ki} + \sum_{\substack{j \in A_k}}x_{kj}}\big)\big) > 0$}. 
player $i$ has a non-zero return from $k$.
We can select {\small $c_{ki} = x_{ki} - R_{ki}$} and then {\small$U_{ki} = x_{ki}$}, a contradiction. Hence, Equation (\ref{eq:full-collateral-condition}) holds.
\end{proof}
\begin{proof} \emph{(Lemma \ref{large-alpha}):} 
Formally, for $j \ s.t. \ x_{kj} = 0$ the collateral is zero by definition, and the statement holds. Next, fix any solution $\textbf{c}$ to the collateral minimization problem. Note that such a solution is also a minimal collateral matrix. 
Denote by $K$ the set edges corresponding to investment opportunities in enterprise $k$ (outgoing edges of $k$), and by $F_k$ the 
set of all edges in $K$ in which $k$ gives a full collateral. Let $\sigma$ be an order of the edges of iterated elimination of dominated strategies. 
Consider the player $i$ such that $(k,i)$ is the first investment opportunity in $k$ (edge $(k,i)$) in the order $\sigma$. 
Denote by $A_i$ the set of \emph{invest edges} when all the edges until $(k,i)$ in the order $\sigma$ are \emph{cooperate edges} and all edges after $(k,i)$ in $\sigma$ are \emph{defect edges} and denote $A_{ki} = A_i \cap K$.
Player $i$ strictly prefers to cooperate w.r.t. $k$ 
and thus she cannot be in default 
(or otherwise $i$ would prefer to defect). 
If 
$i \in F_k$ 
then the statement holds for this edge ($i$ has a full collateral). 
Otherwise, if  
{\small $x_{ki} + 
\sum_{\substack{j \in A_k}}x_{kj} 
\leq Z_k$}, 
then, by Lemma \ref{full-collaterals}, 
$i$ must also get full collateral, a contradiction to $i \notin F_k$. 
Therefore, as the input is integer, {\small$x_{ki} + \sum_{\substack{j \in A_k}}x_{kj} \geq Z_k + 1$}, and so for $\alpha_k > Z_k$ it holds that {\small$x_{ki} + \sum_{\substack{j \in A_k}}x_{kj} \geq Z_k + 1 > Z_k\left(1+ 1 / \alpha_k\right)$}, which by Lemma \ref{thm:zero-collaterals-1} means that the minimum collateral $c_{ki}$ needed to incentivize player $i$ to cooperate is zero. 
The argument holds for all the following investments in $k$ in the order $\sigma$. 
Thus we conclude that in an optimal solution the collateral for every investment opportunity in enterprise $k$ is either full or zero.
\end{proof}

\subsection{Proofs for Section \ref{sec:single}}

\begin{proof} \emph{(Lemma \ref{thm:zero-collaterals-2}):} 
By Lemma \ref{thm:zero-collaterals-1}, $i$ having a strict best reply to cooperate implies  
{\small $x_i + \sum_{\substack{j \in A}}x_j \geq Z\left(1+\frac{1}{\alpha}\right)$}. 
After player $i$ joins and plays \emph{cooperate}, for every player 
$m \notin A \cup \left\{i\right\}$  it holds that 
{\small $x_m + x_i + \sum_{\substack{j \in A}}x_j \geq  Z\left(1+\frac{1}{\alpha}\right)$} 
and so, by Lemma \ref{thm:zero-collaterals-1} player $m$ has a strict best reply to cooperate with zero collateral. 
\end{proof}

\begin{proof} \emph{(Proposition \ref{thm:collateral-costs}):}
Fix any ordering $\sigma$ of the players. For $\sigma$ to be an order of iterated elimination of strictly dominated strategies leading to the \emph{all-cooperate} equilibrium we require that player $\sigma_1$ has a strictly dominant strategy to cooperate and that given that $\sigma_1$ cooperates, player $\sigma_2$'s strict best reply is \emph{cooperate} for any action profile of the players $\sigma_{j>2}$, and so forth. Using Equation (\ref{R_i(II)}) this leads to the following set of equations.  
{\small
\begin{equation*}\label{?}
\begin{aligned}
	c_{\sigma_i} + R_{\sigma_i} =
	c_{\sigma_i} + \max \Big(0, x_{\sigma_i} (1+\alpha)\big[1 - \frac{Z}{\sum_{\substack{{\sigma_j \leq {\sigma_i}}}}x_j}\big]\Big)  
	\geq 
	x_{\sigma_i} \\
	\Longleftrightarrow  \quad 
	c_{\sigma_i} \geq x_{\sigma_i} - \max \Big(0, x_{\sigma_i} (1+\alpha)\big[1 - \frac{Z}{\sum_{\substack{{\sigma_j \leq {\sigma_i}}}}x_j}\big]\Big) 
\end{aligned}
\end{equation*}
}
where the first equality is since the worst case enterprise return $R_{\sigma_i}$ to player $\sigma_i$ if she cooperates is when only the players that preceded her in the order of elimination and herself cooperate.   
Her reward in this case must be at least $x_i$ to make her prefer investing. 
By the assumption of non-negative collaterals we obtain:
{\small
$c_{\sigma_i} \geq x_{\sigma_i} \Big[1 - \max \Big(0, \min \Big(1, (1+\alpha)\big[1 - \frac{Z}{\sum_{\substack{{\sigma_j \leq {\sigma_i}}}}x_j}\big]\Big)\Big) \Big],$
} 
which is equivalent to: 
{\small
$c_{\sigma_i} \geq x_{\sigma_i} \cdot \max \Big(0, \min \Big(1, 1-(1+\alpha)\big[1-\frac{Z}{\sum_{\substack{\sigma_j \leq \sigma_i}}x_{\sigma_j}}\big]\Big)\Big),$
} 
with equality giving the minimal collateral for each player ${\sigma_i}$. 
\end{proof}

\begin{proof} \emph{(Theorem \ref{thm:opt-partial-collaterals}):} 
Equation (\ref{eq:opt-partial-collaterals}) is obtained directly from Proposition \ref{thm:collateral-costs} under the condition that in the order $\sigma$ of iterated elimination of dominated strategies, players in $A$ appear first (in an arbitrary order) and they are followed by the players in $B$, ordered in an increasing order of their indices.
In cases of subsets of players with equal investment amounts the order of iterated elimination $\sigma$ can always be permuted such that players in each such subset will be ordered in the increasing order of their indices. This operation does not change the sum of collaterals due to symmetry. 

Now consider an optimal solution in which $A$ is the set of players with full collaterals and $B = A^c$.    
If there is only one player in $B$ the condition of the ordering of players in $B$ is trivially satisfied. 
If there are at least two players in $B$, assume for the purpose of contradiction that the condition is not satisfied. Then, there are two players $i_l$, $i_s$ with investments $l > s$ that are consecutive players in the order of iterated elimination, i.e., $\sigma_{i_l} = \sigma_{i_s} + 1$.
We will show that for such pair of players one of the following holds: (i) the collaterals for both players are zero and they can be switched in the order $\sigma$ without affecting the sum of collaterals, or (ii) the sum of all collaterals can be reduced by switching the order of these two players in $\sigma$ and recalculating the collaterals by Proposition \ref{thm:collateral-costs}, which leads to a contradiction with the optimality assumption.

In the case that the collateral to player $i_s$ is zero, from Lemma \ref{thm:zero-collaterals-2} the collateral to player $i_l$ is also zero. Moreover, since $l > s$, it can be seen from Lemma \ref{thm:zero-collaterals-1} that it is possible to switch the players $i_s$ and $i_l$ in the order $\sigma$ and still the collateral to each of these two players in the minimal collateral vector calculated after the replacement remains zero. 
The remaining cases are the case where both players have positive collaterals and the case where the collateral to $i_s$ is positive and the collateral to $i_l$ is zero. 

Note that since the players $i_s$ and $i_l$ are consecutive players in the order $\sigma$, according to Proposition \ref{thm:collateral-costs}, the collaterals to other players that come before them or after them in $\sigma$ do not change if we switch between $i_s$ and $i_l$. Thus the only difference in the sum of collaterals from such a replacement is the difference in the total collaterals to these two players.  
Denote the sum of collaterals to the two players in the decreasing ordering $\sigma_{i_l} < \sigma_{i_s}$ by $cost(l,s)$ and the sum of collaterals to these two players in the order that we assumed by $cost(s,l)$, and denote the sum of investments of all the players $j, \ s.t. \ \sigma_j < \min(\sigma_{i_l}, \sigma_{i_s})$ as $X_0$. 

In case that both collaterals are positive the two costs are:
{\small
$cost(l,s) = l \big[1 - (1+\alpha)\big(1 - \frac{Z}{X_0+l}\big)\big] + s \big[1 - (1+\alpha)\big(1 - \frac{Z}{X_0+l+s}\big)\big]$,} and 
{\small
$cost(s,l) = s \big[1 - (1+\alpha)\big(1 - \frac{Z}{X_0+s}\big)\big] + l \big[1 - (1+\alpha)\big(1 - \frac{Z}{X_0+s+l}\big)\big]$.
} 
%
The difference is:
{\small
$cost(s,l) - cost(l,s) = \frac{Z(1+\alpha)(s+l+2X_0)sl}{(X_0+l)(X_0+s)(X_0+l+s)} > 0$.
} 
%

In the case that the collateral to the first player, $i_s$, is positive and the collateral to the second player, $i_l$, is zero, then:
{\small
$cost(s,l) = s \big[1 - (1+\alpha)\big(1 - \frac{Z}{X_0+s}\big)\big]$,
} and:  
{\small
$cost(l,s) = l \cdot \max\big(0,  \big[1 - (1+\alpha)\big(1 - \frac{Z}{X_0+l}\big)\big]\big)$.
} 
%
The difference, {\small$cost(s,l) - cost(l,s)$,} is:
{\small
$s \big[1 - (1+\alpha)\big(1 - \frac{Z}{X_0+s}\big)\big] > 0$,
} or: 
{\small  
$(l-s)\big[\alpha + \frac{Z(1+\alpha)}{(X_0+s)(X_0+l)}\big] > 0$.
} 
Thus, we can always reduce the sum of all collaterals, in a weak sense, by switching any pair of consecutive players in the order of iterated elimination who have partial collaterals such that the player with the smallest index (and so a larger or equal investment amount) comes first.  
\end{proof}

\begin{proof} \emph{(Corollary \ref{thm:decreasing-partial-collaterals-corollary}):}
As $x_a \geq x_b$, by the theorem we can assume in an optimal solution $a < b$. Hence, the per-dollar collaterals satisfy:
{\small
$\big[1 - (1+\alpha)\big(1-\frac{Z}{X_A + \sum_{\substack{j \leq a}}x_j}\big)\big] > \big[1 - (1+\alpha)\big(1-\frac{Z}{X_A + \sum_{\substack{j \leq b}}x_j}\big)\big]$.} I.e., the per-dollar collateral to player $a$ is higher. Since $a$'s investment amount is at least as high as $b$'s investment, both the absolute and per-dollar collateral to player $a$ are higher than these of player $b$.
\end{proof}

\begin{proof} \emph{(Lemma \ref{largest-player}):}
Fix any optimal solution to the collateral minimization problem. 
Denote by $S^*$ the set of players with full collaterals in that solution. The other players have zero collaterals 
by Lemma \ref{large-alpha}. 
ssume in contradiction that there is a player $i_l \in S^*$ with investment $l$ that is strictly larger than any investment not in $S^*$. Consider any player $i_s \notin S^*$. It has an investment $s<l$. 
Denote $X_0=(\sum_{i\in S^*} x_i) - l$. 
As $S^*$ is a solution, $i_s$ cooperates with zero collateral. 
By Lemma \ref{thm:zero-collaterals-1}: 
{\small $s + (X_0 + l) \geq  
Z (1+\frac{1}{\alpha})$.} 
I.e., 
{\small $l + (X_0 + s) \geq
Z (1+\frac{1}{\alpha})$.} 
Thus, if we switch the zero and full collaterals of $i_l$ and $i_s$, 
by Lemma \ref{thm:zero-collaterals-1} both players still prefer investing and by Lemma \ref{thm:zero-collaterals-2} the others follow, 
but the total collateral is $X_0 + s$,   
which is less than $X_0 + l$, 
in contradiction to the optimality assumption.
\end{proof}

\begin{proof} \emph{(Lemma \ref{knapsak-iff-inverse-knapsack-new}):}
	Recall that the (uniform) knapsack problem gets as input integers $\left\{x_i\right\}_{i=1}^n$ and an integer threshold $r$, and is asks to find a set $T$ of indecis such that $\sum_{i \in T} x_i$ is maximized, under the constraint that  
	$\sum_{i \in S}x_i \leq r $. Denote $X=\sum_{i\in [n]} x_i$. The set $T$ is a solution to the knapsack problem with threshold $r$ if and only if the set $K=[n]\setminus T$ is a solution to the inverse knapsack problem with threshold $t=X-(r+1)$. 
\end{proof}
\end{document}